\documentclass[a4paper,onecolumn,superscriptaddress,10pt,unpublished,nopdfoutputerror]{compositionalityarticle}

\usepackage[utf8]{inputenc}
\usepackage[english]{babel}
\usepackage[T1]{fontenc}
\usepackage{amsmath}
\usepackage{hyperref}
\usepackage{tikz}
\usepackage{lipsum}
\usepackage{tikz-cd}
\usepackage{mathrsfs}

\usepackage{physics}
\usepackage{stmaryrd}
\usepackage{tikz}
\usepackage{tikz-cd}
\usepackage{tikzit}
\usepackage{circuitikz}
\usepackage{tubes}
\usepackage{xsavebox}

\input{styles.tikzdefs}

\tikzstyle{discarding}=[fill=white, draw=black, shape=circle, style=upground]
\tikzstyle{smalldiscarding}=[fill=white, draw=black, style=upground, scale=0.75]
\tikzstyle{backdiscard}=[fill=white, draw=black, shape=circle, style=downground, scale=0.5]
\tikzstyle{smallbackdiscard}=[fill=white, draw=black, shape=circle, style=downground, scale=0.5]
\tikzstyle{state}=[fill=white, draw=black, style=triang, tikzit shape=rectangle]
\tikzstyle{kstate}=[fill=white, draw=black, style=kpoint, tikzit shape=rectangle]
\tikzstyle{kstateconj}=[fill=white, draw=black, style=kpoint conjugate, tikzit shape=rectangle]
\tikzstyle{kstateBIG}=[fill=white, draw=black, style=big kpoint, tikzit shape=rectangle]
\tikzstyle{effect}=[fill=white, draw=black, style=triangdag]
\tikzstyle{keffect}=[fill=white, draw=black, style=kpoint adjoint]
\tikzstyle{keffectconj}=[fill=white, draw=black, style=kpoint transpose]
\tikzstyle{morphdag}=[style=mapdag]
\tikzstyle{morph}=[style=hadamard]
\tikzstyle{WIDEmorph}=[style=hadamard, minimum width=14mm]
\tikzstyle{morphtrans}=[style=maptrans]
\tikzstyle{morphconj}=[style=mapconj]
\tikzstyle{CPMmorph}=[style=dmap]
\tikzstyle{CPMmorphconj}=[style=dmapconj]
\tikzstyle{CPMmorphdag}=[style=dmapdag]
\tikzstyle{CPMmorphtrans}=[style=dmaptrans]
\tikzstyle{CPMstate}=[fill=white, draw=black, style=triang, doubled]
\tikzstyle{CPMstateBIG}=[fill=white, draw=black, style={triang_lesssep}, doubled]
\tikzstyle{CPMkstate}=[fill=white, draw=black, style=kpoint, tikzit shape=rectangle, doubled]
\tikzstyle{CPMkstateconj}=[fill=white, draw=black, style=kpoint conjugate, tikzit shape=rectangle, doubled]
\tikzstyle{CPMkstateBIG}=[fill=white, draw=black, style=big kpoint, tikzit shape=rectangle, doubled]
\tikzstyle{CPMkeffect}=[fill=white, draw=black, style=kpoint adjoint, doubled]
\tikzstyle{CPMkeffectconj}=[fill=white, draw=black, style=kpoint transpose, doubled]
\tikzstyle{UHfB}=[fill=white, draw=black, style=triangdag, doubled, inner sep=-2pt]
\tikzstyle{leak}=[style=tinypoint, regular polygon rotate=-90]
\tikzstyle{leakfill}=[style=tinypoint, regular polygon rotate=-90, fill=black]
\tikzstyle{Z}=[style=dot, fill=green]
\tikzstyle{X}=[style=dot, fill=red]
\tikzstyle{black_dot}=[style=dot, fill=black]
\tikzstyle{white_dot}=[style=dot, fill=white]
\tikzstyle{qblack_dot}=[style=ddot, fill=black]
\tikzstyle{qwhite_dot}=[style=ddot, fill=white]
\tikzstyle{whitephase}=[style=wphase dot, fill=white]
\tikzstyle{qredphase}=[style=phase dot, fill=red]
\tikzstyle{qgreenphase}=[style=phase dot, fill=green]
\tikzstyle{had}=[style=hadamard, doubled]
\tikzstyle{box}=[style=hadamard]
\tikzstyle{bigbox}=[style=hadamard, minimum height=4mm, minimum width=8mm]
\tikzstyle{classhad}=[style=hadamard]
\tikzstyle{antipode}=[style=anti]

\tikzstyle{dottededge}=[-, dotted]
\tikzstyle{double_edge}=[-, double]
\tikzstyle{double_dot}=[-, double, very thick, dotted]
\tikzstyle{double_bold}=[very thick, double]
\tikzstyle{double edge}=[-, very thick]
\tikzstyle{new edge style 0}=[<-]
\tikzstyle{new edge style 1}=[-, draw={rgb,255: red,242; green,233; blue,206}, fill={rgb,255: red,242; green,233; blue,206}]
\tikzstyle{morphism_shade}=[-, draw=black, fill={rgb,255: red,242; green,233; blue,206}, line join=bevel]
\tikzstyle{supermap_shade}=[-, fill={rgb,255: red,216; green,215; blue,242}, draw=black, line join=bevel]
\tikzstyle{hole_shade}=[-, fill=white, draw=black,line join=bevel]
\tikzstyle{new edge style 2}=[-, draw={rgb,255: red,14; green,188; blue,83}]
\tikzstyle{new edge style 3}=[<-, draw={rgb,255: red,234; green,209; blue,255}]
\tikzstyle{new edge style 4}=[<-, draw={rgb,255: red,0; green,128; blue,128}]
\tikzstyle{new edge style 5}=[-, draw={rgb,255: red,214; green,110; blue,62}]
\tikzstyle{new edge style 6}=[-, draw={rgb,255: red,174; green,20; blue,174}]

\newcommand{\tikzfigscale}[2]{\scalebox{#1}{\input{#2.tikz}}}

\newcommand{\cat}[1]{\mathcal{#1}}

\newcommand{\seq}{\varogreaterthan}

\DeclareFontFamily{U}{min}{}
\DeclareFontShape{U}{min}{m}{n}{<-> udmj30}{}

\makeatletter
\newcommand{\xRightarrow}[2][]{\ext@arrow 0359\Rightarrowfill@{#1}{#2}}
\makeatother

\usepackage{comment}

\newcommand{\bigcircle}{\mathop{\mathchoice
  {\vcenter{\hbox{\LARGE$\bigcirc$}}}
  {\vcenter{\hbox{\Large$\bigcirc$}}}
  {\vcenter{\hbox{\large$\bigcirc$}}}
  {\vcenter{\hbox{\small$\bigcirc$}}}
}}

\pgfsetlayers{background,strings,edgelayer,nodelayer,foreground,main}

\usepackage{amssymb}

\usepackage{amsthm}

\theoremstyle{plain}
\newtheorem{theorem}{Theorem}[section]
\newtheorem{lemma}{Lemma}[section]
\newtheorem{example}{Example}[section]

\theoremstyle{definition}
\newtheorem{definition}{Definition}[section]

\theoremstyle{remark}

\begin{document}

\title{Higher-order circuits}
\date{}
\author{Matt Wilson}
\email{matthew.wilson@centralesupelec.fr}
\affiliation{Université Paris-Saclay, CentraleSupélec, Inria, CNRS, LMF, 91190 Gif-sur-Yvette, France}

\maketitle

\begin{abstract}
We write down a series of basic laws for (strict) higher-order circuit diagrams. More precisely, we define higher-order circuit theories in terms of: (a) nesting, (b) temporal and spatial composition, and (c) equivalence between lower-order bipartite processes and higher-order bipartite states. 
In category-theoretic terms, these laws are expressed using enrichment and cotensors in symmetric polycategories, along with a frobenius-like coherence between them.
We describe how these laws capture the salient features of higher-order quantum theory, and discover an upper bound for higher-order circuits: any higher-order circuit theory embeds into the theory of strong profunctors.
\end{abstract}

In the past two decades of quantum information theory, and decade of applied category theory, a simple concept has been studied in considerable depth, the notion of a (circuit-theoretic) \textit{hole}. For the physicist, the hole might represent: the most general kind of manipulation which can be applied to a computational gate \cite{Chiribella2008TransformingSupermaps}, the most general possible environment into-which a process could be embeded \cite{Oreshkov2012QuantumOrder}, a node within a causal structure \cite{BarrettCyclicModels}, a generalisation of the notion of non-markovian process to the quantum regime \cite{Pollock2015Non-MarkovianCharacterisation}, and a quantum generalisation for interactive games \cite{Gutoski_2007} and agent-policies \cite{wilson2026agentpolicieshigherordercausal}. On the other hand, in applied category theory, holes are modelled typically using coend and profunctor optics \cite{Riley2018CategoriesOptics, Clarke2020ProfunctorUpdate, Roman2020CombFeedback, Boisseau2022CorneringOptics, Roman2020OpenCalculus, earnshaw2023produoidalalgebraprocessdecomposition}, with connections between these alternative viewpoints explored in \cite{Hefford2022CoendCombs, hefford_prof_sup,hefford2025bvcategoryspacetimeinterventions}.

It would not be surprising to find such a concept to be ubiquitous. A function of a function, a morphism of a morphism, is no big stretch of the imagination, particularly in the light of the lambda calculus and its proposed quantum generalisations \cite{Selinger_2005}. Despite this, an open problem remains in our understanding of the concept of circuit-theoretic holes - the completion of a somewhat deceivingly simple to state analogy:
\begin{align}
\textrm{Circuits} \quad &  \leftrightarrow \quad  \textrm{Monoidal Categories}, \\
(\textrm{Higher-order}) \textrm{ Circuits} \quad &  \leftrightarrow \quad  (\textrm{Higher-order}) \textrm{ Monoidal Categories}.
\end{align}
In other words, we do not have a definition of what it means according to category theory to be higher-order over a monoidal category.

A natural proposal for higher-order monoidal categories, is to use the already established categorical notion of \textit{closed} monoidal categories \cite{Eilenberg1966ClosedCategories}. Closure in a monoidal category captures a salient feature of higher-order functions, namely, that they can be curried to give a natural isomorphism of the form $\mathcal{C}(a \otimes b, c) \cong \mathcal{C}(a, [b,c])$. This is a structure with a notion of space, and a notion of higher order morphisms as those of type $[a,a'] \rightarrow [b,b']$ in $\cat{C}$. The closed monoidal approach was reinterpreted as a foundational base upon which categories of holes could be studied (specifically in the context of quantum causality) in \cite{Wilson2021CausalityTheories}, however, this view adds something unnecessary and misses something necessary to the study of circuit-theoretic holes:
\begin{itemize}
\item The existence of iterated higher-order structure in closed monoidal categories requires iterated notions of higher-order morphism, which appear unnecessary for the qualification of a theory as a theory of holes. Indeed, in the theory of quantum supermaps, the majority of attention is given to those higher-order maps which are no-more than $2^{nd}$-order \cite{Chiribella2008TransformingSupermaps, Oreshkov2012QuantumOrder}. 
\item A key aspect of holes as modelled by quantum supermaps is that one need not insert an entire process into a hole, instead one might only insert part of a multipartite process into a hole. There is nothing in the structure of a closed monoidal category which allows for this. One cannot for instance ask that with respect to the closed monoidal product  \[ \mathcal{C}(A , A') \otimes \mathcal{C}(B,B')  \cong  \mathcal{C}(A  \otimes  B,A'  \otimes  B').  \] Indeed, to ask for such an isomorphism as argued in \cite{wilson_polycategories}, is close to request the existence of pathological time loops\footnote{In-fact, for higher order circuit theories as they will be defined in this paper, it is conjectured by the author that the existence of such as isomorphism is in-fact precisely the requirement that the category at hand be a traced symmetric monoidal category. A proof of this fact is left for future work.}. 
\end{itemize}

The first point can be resolved by replacing the notion of a \textit{closed} symmetric monoidal category with the notion of a $\cat{V}$-enriched symmetric monoidal category $\cat{C}$ for some $\cat{V}$, referred to from now on simply as a $\cat{V}$-smc $\cat{C}$ for short \cite{Johnstone1983BASIC64}. This view on higher-order processes explored in \cite{Rennela2018ClassicalTheory} and \cite{wilson2023mathematical}, allows for higher-order without requiring the existence of higher-higher-order maps and their iterations. Holes for $\cat{C}$ are taken to exist in $\cat{V}$, but no claim is made regarding holes for the higher-order theory $\cat{V}$. The fact that this view gives a non-iterated generalisation of closed monoidal categories is further supported by the observation that mergers of infinite sequences of such enriched monoidal categories, do in-fact form categories which are closed \cite{wilson2023mathematical}.

Regarding the second point, a seminal paper on the compositional properties of higher-order quantum operations \cite{kissinger_caus}, identifies that they form not just closed monoidal categories but in fact star-autonomous categories. As a result, quantum higher-order operations can be equipped with a pair of linearly distributive tensor products \cite{COCKETT1997133}, in which an additional non-closed tensor called the par correctly builds $\mathcal{C}(A  \otimes  B,A'  \otimes  B')$ from $ \mathcal{C}(A , A')$ and $ \mathcal{C}(B , B')$. The $*$-autonomous viewpoint does again, however, assume the existence of higher-higher-iterated objects. 

As argued in \cite{wilson_polycategories}, the second point can be resolved in a theory-independent setting by relaxing the possibility that $\cat{V}$ itself be a symmetric monoidal category, requiring instead only that it might form a symmetric polycategory \cite{Szabo01011975}. Symmetric polycategories are both perfectly reasonable bases for enrichment \cite{LEINSTER2002391}, and what remains when representability is stripped away from categories which are linearly distributive. What polycategorical structure gives in this setting, is precisely the possibility of having partial applicability of holes onto all bipartite processes without a heuristic production of time loops. 
In short then, a minimal resolution to the above points appears to be that an axiomatisation for higher-order circuits should not be based on closed symmetric monoidal categories, but instead on $\cat{V}$-symmetric monoidal categories for some symmetric polycategory $\cat{V}$. 

This proposal to study higher-order circuits as monoidal categories enriched in symmetric polycategories is quite distinct from the traditional closed-monoidal view on what it means to be higher-order. It is certainly broad enough that it avoids imposing unnecessary structure onto theories of higher-order maps, however it is natural to ask whether it might be still too broad to capture everything that is necessary about higher-order processes.
In this article, we therefore consider the direct consequences of the enriched and polycategorical view on higher-order circuits, giving evidence of their suitability. 
We add to the enriched polycategorical view, a few additional morphisms such as cotensors \cite{COCKETT1997133}, which come with a reasonable interpretation in terms of wires into-which gaps have been inserted (in analogy with the identity morphisms of categories, which simply represent wires without gaps) \cite{coecke_kissinger_2017, joyal_street1}. We then consider the various compatibility laws between the morphisms induced by both enrichment and the existence of cotensors, which represent the non-ambiguity in algebraic interpretation of pictures which convey complex arrangements of wires into which multiple gaps have been cut. 

As formal evidence that higher-order circuits as defined here are not too broad, we outline what we interpret as an upper-bound theorem for higher-order circuits. First, for an arbitrary symmetric monoidal category $\cat{C}$ and higher-order circuit theory $\cat{P}$ over $\cat{C}$, we note that one can always construct it's operational closure $\cat{P}^{\#}$ which simply quotients holes by their action on arbitrary states. Then, we show that there exists a multifunctor $\cat{P}^{\#}  \rightarrow \mathbf{StProf}[\cat{C}]$ into the multicategory of Tambara modules (equivalently Strong Profunctors) \cite{PARE199833, Tambara2006, pastro2007doublesmonoidalcategories} induced by the day tensor product \cite{Day_1977}. 

On the one-hand, this gives a formal argument for the strong-profunctorial \cite{wilson_polycategories, wilson_locality, hefford_supermaps,hefford2025bvcategoryspacetimeinterventions} approach to the study of holes, by demonstrating that the operational behaviour of any hole in any higher-order circuit theory, can be interpreted as a strong natural transformation between strong profunctors. Conversely, the characterisation theorem for locally-applicable transformations (which are, in-fact, strong natural transformations between strong profunctors \cite{hefford_supermaps}) on quantum theory \cite{wilson2025quantumsupermapscharacterizedlocality}, along with this upper-bound theorem demonstrates that on finite-dimensional quantum theory there can be no higher-order circuit theory which supports anything beyond the theory of quantum supermaps \cite{Chiribella2013QuantumStructure, Chiribella2008QuantumArchitectureb, Chiribella2008TransformingSupermaps} - evidencing the claim that higher-order circuits as defined here are indeed not too broad.

Within this article many proofs and definitions are presented at the level of sketches, with for instance the introduction of graphical notations which have not been formally proven sound. Statements are limited to the strict setting \cite{maclane1998categories}, with a key purpose of this article being to motivate full categorification of these ideas, without allowing for the desire for perfect formality or full categorification to hinder development of applications in quantum information theory and quantum foundations in which the level of formality of this article is already beyond what is standard.
Applications within categorical quantum mechanics \cite{Abramsky2004AProtocols} and the foundations of quantum information theory are also reflected on. In particular, the abstract framework presented here gives a new way to define resource theories of higher-order processes \cite{Coecke2014AResources, GourDynamicalResources}, and might contribute towards the clean axiomatisation of higher-order processes \cite{Giacomini_2016, Chiribella2013NormalCP, wilson_polycategories, wilson_locality, hefford_supermaps,hefford2025bvcategoryspacetimeinterventions, bavaresco2024indefinitecausalorderboxworld, sengupta2024achievingmaximalcausalindefiniteness} within post-quantum \cite{barrett_gpts} and non-finite-dimensional theories.

\section{Axiomatising higher-order circuits}

In this section we will introduce a series of axioms for higher-order circuits. We begin with the most fundamental structure; enrichment and multi-arity composition. Following this, we will introduce the important additional feature of cotensors and the laws they induce.
\subsection{Enrichment and Multi-Arity Composition Laws}
Let us begin by recalling the definition of a strict $\mathcal{P}$-smc with $\mathcal{P}$ a symmetric polycategory. Such polycategories will form the backbone of higher-order circuit theories. 

\begin{definition}
A strict $\mathcal{P}$-smc with $\mathcal{P}$ a symmetric polycategory is
\begin{itemize}
\item A collection of objects $a,b,c, \dots$
\item For each pair of objects $a,b$ an associated object in $\mathcal{P}$ denoted $[a,b]$
\item For each triple of objects $a,b,c$ an assocaited \textit{sequential composition} morphism \[ \tikzfigscale{0.8}{figs/hole_circ} \rightarrow [a,c]       \quad : \quad [a,b] [b,c]   \]
\item For each tuple of objects $a,a',b,b'$ an associated \textit{paralell composition} morphism \[  \tikzfigscale{0.8}{figs/hole_24b}   \quad : \quad [a,a'][b,b'] \rightarrow [ab,a'b']   \]
\item For each object $a$ an associated \textit{identity} morphism\[ \tikzfigscale{0.8}{figs/hole_identity_formal}  \quad : \quad i: \emptyset \rightarrow [a,a]\]
\end{itemize}
these structural morphisms are furthermore required to satisfy a variety of laws, which directly mimic the laws of strict symmetric monoidal categories. Namely \textit{sequential associativity}, \textit{tensor associativity}, \textit{interchange}, \textit{sequential unitality}, and \textit{tensor unitality}\footnote{We will give an explicit definition and interpretation of each of these laws later in the text.}. 
\end{definition}

Let us now give an outline following \cite{wilson2023compositional}, on the interpretation of such polycategories as bare-minimum requirements for theories of circuit-theoretic holes. 

\subsubsection{Polycategorical Composition of Holes}

The basic atom of a category is a wire - for us the basic atom is a hole, that is, a wire with a gap depicted as such \[  \tikzfigscale{0.8}{figs/hole_1}  \quad : \quad  \tikzfigscale{0.8}{figs/hole_1b}.  \]
Here we have given the intuitive picture on the left and it's associated formal interpretation into types and morphisms on the right-hand side. 
Operations on holes are depicted as follows: \[    \tikzfigscale{0.8}{figs/hole_2} \quad : \quad  \tikzfigscale{0.8}{figs/hole_3t} ,  \]
where clearly we should be able to compose such operations, and such a composition should come equipped with an identity operation  \[    \tikzfigscale{0.8}{figs/hole_5} \quad : \quad  \tikzfigscale{0.8}{figs/hole_6t}   \quad \quad \quad \quad   \tikzfigscale{0.8}{figs/hole_7} \quad : \quad  \tikzfigscale{0.8}{figs/hole_8t}.   \]
Just as in traditional circuits there is an identity operation on wires - depicted with the wire, in higher-order circuits there is an identity operation on holes - depicted by the hole.   
We can extend to the multi-hole picture in-which multiple-input holes and multiple-output holes are accounted for as follows \[    \tikzfigscale{0.8}{figs/hole_9} \quad : \quad  \tikzfigscale{0.8}{figs/hole_10t}   \quad \quad \quad \quad   \tikzfigscale{0.8}{figs/hole_11} \quad : \quad  \tikzfigscale{0.8}{figs/hole_12t} .  \]
Generally, we can imagine $(n \rightarrow m)$-arity maps on holes for any integers $n,m$  \[    \tikzfigscale{0.8}{figs/hole_13} \quad : \quad  \tikzfigscale{0.8}{figs/hole_14t} .  \]

Multiple-outputs allow us to introduce partial-nesting, this is \textit{the} key idea in both the linear-algebraic \cite{chiribella_supermaps} and categorical \cite{hefford_supermaps} approaches to defining quantum supermaps  \[    \tikzfigscale{0.8}{figs/hole_15} \quad : \quad  \tikzfigscale{0.8}{figs/hole_planar_1} .  \]

\subsubsection{Enriched Structure of Holes}
Given any two boxes, we should be able to put them in sequence or in parallel to interpret them as a new $2$-input hole
\[    \tikzfigscale{0.8}{figs/hole_17} \quad : \quad  \tikzfigscale{0.8}{figs/hole_18c}  \quad \quad \quad \quad   \tikzfigscale{0.8}{figs/hole_21} \quad : \quad  \tikzfigscale{0.8}{figs/hole_22}. \] In-fact, this should not stop at pairs of holes. For instance, we should be able to form terms such as \[    \tikzfigscale{0.8}{figs/hole_multi_seq_1} \quad : \quad  \tikzfigscale{0.8}{figs/hole_18d} \] and \[    \tikzfigscale{0.8}{figs/hole_multi_par_1} \quad : \quad  \tikzfigscale{0.8}{figs/hole_18e}. \] 
Rather than introducing entire novel composition rules to represent these terms, one can simply construct them from the existence of additional structural morphisms. Just as any number of wires defines a canonical map on wires, any number of gaps in wires defines a canonical map on gaps in wires. Consequently, if we introduce the following canonical maps \[    \tikzfigscale{0.8}{figs/hole_19b} \quad : \quad  \tikzfigscale{0.8}{figs/hole_circ}  \quad \quad \quad \quad  \tikzfigscale{0.8}{figs/hole_23b} \quad : \quad  \tikzfigscale{0.8}{figs/hole_24b}, \]   we can then define general composition of boxes through 
 \[      \tikzfigscale{0.8}{figs/hole_18c}     \quad = \quad    \tikzfigscale{0.8}{figs/hole_seq_1} \quad  \quad \quad  \quad         \tikzfigscale{0.8}{figs/hole_22}     \quad = \quad    \tikzfigscale{0.8}{figs/hole_par_1}    .      \]
 Finally, the canonical way to fill a gap in a wire, is with a wire. In other words, the canonical state in higher-order circuits is the identity  \[   \tikzfigscale{0.8}{figs/hole_identityb} \quad = \quad  \tikzfigscale{0.8}{figs/hole_identity_formal} .  \] 
 
 \subsubsection{Enriched Laws for Holes}
We can now impose \textit{sequential associativity} and \textit{tensor associativity} of $\bigcircle$ and $\bigotimes$ respectively with the following equations \[   \tikzfigscale{0.8}{figs/hole_ass_1} \quad = \quad  \tikzfigscale{0.8}{figs/hole_ass_2}  \quad \quad \quad \quad  \tikzfigscale{0.8}{figs/hole_ass_3} \quad = \quad  \tikzfigscale{0.8}{figs/hole_ass_4} . \]
These equations model basic graphical tautologies at the level of the holes they are intended to represent, the tautology for sequential holes being
  \[   \tikzfigscale{0.8}{figs/diagram_hole_circ_1b} \quad : \quad  \tikzfigscale{0.8}{figs/diagram_hole_circ_2b} ,  \]
and the tautology for parallel holes being \[   \tikzfigscale{0.8}{figs/diagram_hole_par_1b} \quad : \quad  \tikzfigscale{0.8}{figs/diagram_hole_par_2b} .  \]

Associativity at the level of general box composition will come for free by unpacking as follows  \[   \tikzfigscale{0.8}{figs/hole_seq_ass_1} \quad = \quad  \tikzfigscale{0.8}{figs/hole_seq_ass_2}  \quad  = \quad  \tikzfigscale{0.8}{figs/hole_seq_ass_3}.  \]

The \textit{interchange} law is imposed by the following equation
\[   \tikzfigscale{0.8}{figs/formal_interchange_1} \quad = \quad  \tikzfigscale{0.8}{figs/formal_interchange_2} ,  \] which in turn captures a more non-trivial four-hole tautology
 \[   \tikzfigscale{0.8}{figs/hole_interchange_1bb} \quad = \quad  \tikzfigscale{0.8}{figs/hole_interchange_2b} .  \] 
The \textit{sequential unit} law states that  \[   \tikzfigscale{0.8}{figs/hole_unit_law_1} \quad = \quad  \tikzfigscale{0.8}{figs/hole_unit_law_2} \quad = \quad  \tikzfigscale{0.8}{figs/hole_unit_law_3}   \] which has a graphical interpretation as 
\[   \tikzfigscale{0.8}{figs/hole_identity_pic_1} \quad = \quad  \tikzfigscale{0.8}{figs/hole_identity_pic_2} .  \]  
Finally, the \textit{tensor unit} law states that
  \[  \tikzfigscale{0.8}{figs/hole_tensor_unit_formal_1}  \quad  = \quad \tikzfigscale{0.8}{figs/hole_unit_law_2},   \] 
  and has the following graphical interpretation
   \[  \tikzfigscale{0.8}{figs/hole_tensor_unit_1}  \quad  = \quad  \tikzfigscale{0.8}{figs/hole_tensor_unit_2}.   \]



\subsection{Higher-order circuits}
We now give a full formal definition of higher-order cirucit theories, which are enrichements into polycategories as outlined above, but equipped with cotensors which model the existence of parallel gaps in parallel pairs of wires. 
\begin{definition}[Higher-order circuit theory]
A Higher-order circuit theory is a strict $\mathcal{P}$-monoidal category with $\mathcal{P}$ a symmetric polycategory equipped with a cotensor $\bullet : [a_1 \dots a_n \rightarrow b_1 \dots b_n] \rightarrow [a_1 ,b_1] \dots [a_n, b_n]$ satisfying 
\begin{itemize}
\item Associativity laws  \[   \tikzfigscale{0.8}{figs/theory_def_4}  \quad = \quad  \tikzfigscale{0.8}{figs/theory_def_5}    \]
\item Frobenius laws  \[   \tikzfigscale{0.8}{figs/hole_frobenius_1} \quad = \quad  \tikzfigscale{0.8}{figs/hole_frobenius_2} \quad \quad \quad \quad     \tikzfigscale{0.8}{figs/hole_frobenius_3} \quad = \quad  \tikzfigscale{0.8}{figs/hole_frobenius_4}   \]
\item The copy law \[   \tikzfigscale{0.8}{figs/hole_cotensor_copy_1} \quad = \quad  \tikzfigscale{0.8}{figs/hole_cotensor_copy_2} .  \]
\end{itemize}
A higher-order circuit theory is referred to as braided if $\cat{C}$ is braided (enriched again in $\mathcal{P}$) and furthermore the cotensor satsifes the following braid law   \[   \tikzfigscale{0.8}{figs/bold_symmetry_1}  \quad = \quad  \tikzfigscale{0.8}{figs/bold_symmetry_2},    \]   where we use $\beta_{ab}: a b \rightarrow ba$ to refer to the braid of $\cat{C}$. 
\end{definition}
Let us now unpack the additional features outlined above as a requirement for higher-order circuit theories. 
\subsubsection{Cotensors} Formally, a cotensor for a pair of objects $b_1, b_2$  in a polycategory is an object $ b_1 \boxtimes b_2$ and morphism $\bullet: b_1 \boxtimes b_2 \rightarrow b_1 b_2$ such that \[  \forall \ S:\underline{x} \rightarrow b_1 b_2 \ \exists  ! \ S': \underline{x} \rightarrow b_1 \boxtimes b_2   \ s.t. \   \tikzfigscale{0.8}{figs/hole_cotensor_universal_2} \ = \   \tikzfigscale{0.8}{figs/hole_cotensor_universal_3} .  \]

For higher-order circuits, we require the existence of a cotensor which inherits it's action on objects by the action of the monoidal product of $\cat{C}$ on the underlying objects of $\cat{C}$. More precisely, we recall that we have objects of the form $[a,b]$ and ask for the cotensor to evaluate as $[a_1 , b_1] \boxtimes \dots \boxtimes[a_n , b_n] = [a_1 \dots a_n, b_1 \dots b_n]$.

This cotensor requirement comes naturally from noting that within the definition of a box with holes there is an ambiguity in the interrpetation of output wires. One could for instance have equally interpreted the diagram on the left as either of the formal diagrams on the right
 \[    \tikzfigscale{0.8}{figs/hole_13} \quad : \quad  \tikzfigscale{0.8}{figs/hole_14t}  \quad \quad \text{or} \quad \quad    \tikzfigscale{0.8}{figs/hole_14tb}.  \]
This tells us that we should see an isomorphism of the kind \[\mathcal{P}(\underline{x} ,\underline{y} [a_1 \dots a_n , b_1 \dots b_n]) \cong \mathcal{P}(\underline{x} , \underline{y} [a_1 , b_1] \dots [ a_n , b_n]),\] suitably natural with respect to compositions via polymorphisms on $\underline{x}$ and $\underline{y}$. 
By the Yoneda lemma for polycategories \cite{Shulman_2021}, one can see that the existence of such isomorphisms would be captured by a particular canonical box formalised by the cotensor. Indeed, we use the cotensor to represent the following wire-gap hole  \[    \tikzfigscale{0.8}{figs/hole_cotensor} \quad : \quad  \tikzfigscale{0.8}{figs/hole_cotensor_1} .  \] In short, the cotensor looks graphically as-if it is the identity, which is fitting since it exists only to identify between lists of objects and composite objects. 


\subsubsection{Coherence laws between cotensors and enrichment}
The \textit{frobenius laws} given by
 \[   \tikzfigscale{0.8}{figs/hole_frobenius_1} \quad = \quad  \tikzfigscale{0.8}{figs/hole_frobenius_2} \quad \quad \quad \quad     \tikzfigscale{0.8}{figs/hole_frobenius_3} \quad = \quad  \tikzfigscale{0.8}{figs/hole_frobenius_4} ,  \] model the following relationships between ways of arranging and composing wire-gaps \[  \tikzfigscale{0.8}{figs/hole_frob_diagram_1} \quad = \quad  \tikzfigscale{0.8}{figs/hole_frob_diagram_2}   \] 
 \[  \tikzfigscale{0.8}{figs/hole_frob_diagram_3} \quad = \quad  \tikzfigscale{0.8}{figs/hole_frob_diagram_4} .  \] 
These laws say that for all intents and purposes, the cotensor does nothing except to let us examine subsystems. 

Next, the copying of the sequential unit through the cotensor  \[   \tikzfigscale{0.8}{figs/hole_cotensor_copy_1} \quad = \quad  \tikzfigscale{0.8}{figs/hole_cotensor_copy_2} ,  \] has a graphical interpretation in terms of tautology for parallel wires \[   \tikzfigscale{0.8}{figs/hole_cotensor_copy_d1} \quad = \quad  \tikzfigscale{0.8}{figs/hole_cotensor_copy_d2}  . \] 
Note finally that the braid law   \[   \tikzfigscale{0.8}{figs/bold_symmetry_1}  \quad = \quad  \tikzfigscale{0.8}{figs/bold_symmetry_2},     \]  encodes unambiguos interpretation of the following diagram \[  \tikzfigscale{0.8}{figs/two_braids}. \]
As we show in the appendix, the braid law actually implies the following more elaborate result for braids by induction  \[   \tikzfigscale{0.8}{figs/bold_braid_1}  \quad = \quad  \tikzfigscale{0.8}{figs/bold_braid_2}.   \]  
This completes the intuitive graphical interpretation of the axioms of higher-order circuit theories.
\subsubsection{Examples of higher-order circuit theories}
Let us finally see a few important examples.
\begin{example}
Every compact closed category \cite{KELLY1980193} defines a higher-order circuit theory over itself. Enrichment is inherited directly from closure so that $[a,a']= a* \otimes a`$, polycategorical structure is given by simply forgetting the monoidal product (passing through the forgetful functor from linearly distributive categories to polycategories, in the special case of a linearly distributive category in which the tensor and par coincide \cite{COCKETT1997133}). The cotensor is given by the braid. The associativity laws, frobenius laws, and the copy law are then all trivially satisfied by the axioms of compact closed categories.
\end{example}

\begin{example}
Let $\mathbf{C}$ be a precausal category, and recall that $\mathbf{Caus}[\mathbf{C}]$ is a star-autonomous, and so linearly-distributive, category \cite{kissinger_caus} with a tensor $\otimes$ and a par $\boxtimes$. Consider the symmetric monoidal category $[\mathbf{C}]_1$ which arises from the full subcategory of $\mathbf{Caus}[\mathbf{C}]$ generated by first order objects $a,b, \dots$ and the tensor (which for first order objects coincides with the par). Consider also the full symmetric linearly distributive subcategory $[\mathcal{C}]_2$ generated by all objects of the form $[a,a']$ under both the tensor and the par, with each of $a,a'$ given by first-order objects.

By the $*$-autonomous structure of $\mathbf{Caus}[\mathcal{C}]$ we see that $[\mathcal{C}]_1$ is symmetric monoidal enriched in $[\mathcal{C}]_2$. Furthermore, the forgetful functor from linearly distributive categories to symmetric polycategories \cite{COCKETT1997133} can be applied to give a (with minor abuse of notation) $[\mathcal{C}]_2$-enriched smc $[\mathcal{C}]_1$ with $[\mathcal{C}]_2$ a symmetric polycategory. 

It is shown in \cite{kissinger_caus} that there exists an isomorphism in $\mathbf{Caus}[\mathcal{C}]_2$ of a form $[a \otimes b , a'\otimes b'] \cong [a,a'] \boxtimes [b,b']$. Under the forgetful functor this isomorphism becomes the cotensor. The frobenius and copy laws, are then all inherited directly from the compact closed structure of $\mathcal{C}$ as in the previous example.
\end{example}

\begin{example}
 Consider any strict symmetric monoidal subcategory $\mathcal{C} \subseteq \mathcal{D}$ of a strict compact closed category $\mathcal{D}$. A $\mathcal{D}$-supermap on $\mathcal{C}$ \cite{wilson2025quantumsupermapscharacterizedlocality} of type \[ \otimes_{i}(a_i \Rightarrow a_i') \rightarrow (b \Rightarrow b') \]  is a process of type $S: b \otimes (a_1' \otimes \dots \otimes a_n') \rightarrow b' \otimes (a_1 \otimes \dots \otimes a_n)$ in $\cat{D}$ such that for every collection $\phi_i : a_i \otimes x_i \rightarrow a_i' \otimes x_i'$ of processes in $\cat{C}$ the following process is also in $\cat{C}$. \[ \begin{tikzpicture}[scale=0.6, {every node/.style}={scale=0.8}]
	\begin{pgfonlayer}{nodelayer}
		\node [style=none] (117) at (-1.25, -3.55) {$b$};
		\node [style=none] (118) at (-1.25, 3.825) {$b'$};
		\node [style=none] (123) at (-1.5, 1.5) {};
		\node [style=none] (124) at (-1.5, 0.5) {};
		\node [style=none] (125) at (3.75, 1.5) {};
		\node [style=none] (126) at (3.75, 0.5) {};
		\node [style=none] (127) at (-1.25, 3.5) {};
		\node [style=none] (128) at (-1.25, -3.225) {};
		\node [style=none] (129) at (-1.25, 1.5) {};
		\node [style=none] (130) at (-1.25, 0.5) {};
		\node [style=none] (131) at (-1, 1) {$S$};
		\node [style=none] (134) at (-0.25, 2.25) {};
		\node [style=none] (135) at (-0.25, 1.5) {};
		\node [style=none] (136) at (-0.25, -0.45) {};
		\node [style=none] (137) at (-0.25, 0.5) {};
		\node [style=none] (196) at (-0.5, -0.45) {};
		\node [style=none] (197) at (-0.5, -1.05) {};
		\node [style=none] (198) at (1, -0.45) {};
		\node [style=none] (199) at (1, -1.05) {};
		\node [style=none] (200) at (-0.25, -0.45) {};
		\node [style=none] (202) at (0.25, -0.75) {$\phi_1$};
		\node [style=none] (206) at (0.25, -0.425) {};
		\node [style=none] (207) at (1.25, 2.25) {};
		\node [style=none] (208) at (1.25, -1.75) {};
		\node [style=none] (209) at (-0.25, -1.75) {};
		\node [style=none] (210) at (-0.25, -1.05) {};
		\node [style=none] (211) at (0.75, -1.05) {};
		\node [style=none] (212) at (0.75, -0.45) {};
		\node [style=none] (213) at (0.75, 3.5) {};
		\node [style=none] (214) at (0.75, -3.25) {};
		\node [style=none] (215) at (0.75, -3.575) {$x_1$};
		\node [style=none] (216) at (0.75, 3.825) {$x_1'$};
		\node [style=none] (217) at (-0.525, -1.5) {$a_1$};
		\node [style=none] (218) at (-0.55, 0) {$a_1'$};
		\node [style=none] (219) at (-0.525, 2) {$a_1$};
		\node [style=none] (300) at (3.25, 2.25) {};
		\node [style=none] (301) at (3.25, 1.5) {};
		\node [style=none] (302) at (3.25, -0.45) {};
		\node [style=none] (303) at (3.25, 0.5) {};
		\node [style=none] (304) at (3, -0.45) {};
		\node [style=none] (305) at (3, -1.05) {};
		\node [style=none] (306) at (4.5, -0.45) {};
		\node [style=none] (307) at (4.5, -1.05) {};
		\node [style=none] (308) at (3.25, -0.45) {};
		\node [style=none] (309) at (3.75, -0.75) {$\phi_n$};
		\node [style=none] (310) at (3.75, -0.425) {};
		\node [style=none] (311) at (4.75, 2.25) {};
		\node [style=none] (312) at (4.75, -1.75) {};
		\node [style=none] (313) at (3.25, -1.75) {};
		\node [style=none] (314) at (3.25, -1.05) {};
		\node [style=none] (315) at (4.25, -1.05) {};
		\node [style=none] (316) at (4.25, -0.45) {};
		\node [style=none] (317) at (4.25, 3.5) {};
		\node [style=none] (318) at (4.25, -3.25) {};
		\node [style=none] (319) at (4.25, -3.575) {$x_n$};
		\node [style=none] (320) at (4.25, 3.825) {$x_n'$};
		\node [style=none] (321) at (2.9, -1.5) {$a_n$};
		\node [style=none] (322) at (2.925, 0) {$a_n'$};
		\node [style=none] (323) at (2.925, 2) {$a_n$};
		\node [style=none] (324) at (2, 2) {$\dots$};
		\node [style=none] (325) at (2, -0.75) {$\dots$};
		\node [style=none] (326) at (0.75, 1.75) {};
		\node [style=none] (327) at (1.25, 0.25) {};
		\node [style=none] (328) at (0.75, 0.25) {};
		\node [style=none] (329) at (1.25, 1.75) {};
	\end{pgfonlayer}
	\begin{pgfonlayer}{edgelayer}
		\draw (123.center)
			 to (125.center)
			 to (126.center)
			 to (124.center)
			 to cycle;
		\draw (129.center) to (127.center);
		\draw (130.center) to (128.center);
		\draw [in=90, out=-90] (134.center) to (135.center);
		\draw [in=90, out=-90] (137.center) to (136.center);
		\draw [style={morphism_shade}] (196.center)
			 to (198.center)
			 to (199.center)
			 to (197.center)
			 to cycle;
		\draw (210.center) to (209.center);
		\draw (211.center) to (214.center);
		\draw [bend left=90, looseness=1.25] (134.center) to (207.center);
		\draw [bend left=90] (208.center) to (209.center);
		\draw [in=90, out=-90] (300.center) to (301.center);
		\draw [in=90, out=-90] (303.center) to (302.center);
		\draw [style={morphism_shade}] (304.center)
			 to (306.center)
			 to (307.center)
			 to (305.center)
			 to cycle;
		\draw (314.center) to (313.center);
		\draw (317.center) to (316.center);
		\draw (315.center) to (318.center);
		\draw [bend left=90, looseness=1.25] (300.center) to (311.center);
		\draw (311.center) to (312.center);
		\draw [bend left=90] (312.center) to (313.center);
		\draw (213.center) to (326.center);
		\draw (328.center) to (212.center);
		\draw (326.center) to (328.center);
		\draw (329.center) to (327.center);
		\draw (327.center) to (208.center);
		\draw (207.center) to (329.center);
	\end{pgfonlayer}
\end{tikzpicture} \] 
 
 We can define a symmetric polycategory by defining the polymorphisms of type $\star_i [a_i,a_i'] \rightarrow \star_k[b_k,b_k']$ to be the $\mathcal{D}$-supermaps on $\mathcal{C}$ of type $\star_i [a_i,a_i'] \rightarrow [\otimes_k b_k, \otimes_k b_k']$. It is proven that this is a symmetric polycategory in \cite{wilson_polycategories}, and it is easy to see furthermore that $\mathcal{C}$ is symmetric monoidal enriched in this polycategory $[\cat{D}]\mathbf{sup}[\cat{C}]$. The cotensor is defined by \[ \begin{tikzpicture}[scale=0.6, {every node/.style}={scale=0.8}]
	\begin{pgfonlayer}{nodelayer}
		\node [style=none] (123) at (-2.25, 1.5) {};
		\node [style=none] (124) at (-2.25, 0.5) {};
		\node [style=none] (125) at (3.75, 1.5) {};
		\node [style=none] (126) at (3.75, 0.5) {};
		\node [style=none] (131) at (-1.75, 1) {$S$};
		\node [style=none] (134) at (1.75, 2.5) {};
		\node [style=none] (135) at (1.75, 1.5) {};
		\node [style=none] (136) at (1.75, -0.45) {};
		\node [style=none] (137) at (1.75, 0.5) {};
		\node [style=none] (200) at (1.75, -0.45) {};
		\node [style=none] (218) at (1.45, 0) {$b_1$};
		\node [style=none] (219) at (1.475, 2) {$a_1$};
		\node [style=none] (300) at (3.25, 2.5) {};
		\node [style=none] (301) at (3.25, 1.5) {};
		\node [style=none] (302) at (3.25, -0.45) {};
		\node [style=none] (303) at (3.25, 0.5) {};
		\node [style=none] (308) at (3.25, -0.45) {};
		\node [style=none] (323) at (2.925, 2) {$a_2$};
		\node [style=none] (326) at (3, 0) {$b_2$};
		\node [style=none] (327) at (-1.2, -0.45) {};
		\node [style=none] (328) at (-1.2, 0.55) {};
		\node [style=none] (329) at (-1.2, 2.5) {};
		\node [style=none] (330) at (-1.2, 1.55) {};
		\node [style=none] (331) at (-1.2, 2.5) {};
		\node [style=none] (332) at (-1.5, 2.05) {$b_1$};
		\node [style=none] (333) at (-1.475, 0.05) {$a_1$};
		\node [style=none] (334) at (0.3, -0.45) {};
		\node [style=none] (335) at (0.3, 0.55) {};
		\node [style=none] (336) at (0.3, 2.5) {};
		\node [style=none] (337) at (0.3, 1.55) {};
		\node [style=none] (338) at (0.3, 2.5) {};
		\node [style=none] (339) at (-0.025, 0.05) {$a_2$};
		\node [style=none] (340) at (0.05, 2.05) {$b_2$};
	\end{pgfonlayer}
	\begin{pgfonlayer}{edgelayer}
		\draw (123.center)
			 to (125.center)
			 to (126.center)
			 to (124.center)
			 to cycle;
		\draw (134.center) to (135.center);
		\draw [in=90, out=-90] (137.center) to (136.center);
		\draw [in=90, out=-90] (300.center) to (301.center);
		\draw [in=90, out=-90] (303.center) to (302.center);
		\draw [in=-90, out=90] (327.center) to (328.center);
		\draw [in=-90, out=90] (330.center) to (329.center);
		\draw [in=-90, out=90] (334.center) to (335.center);
		\draw [in=-90, out=90] (337.center) to (336.center);
		\draw [in=90, out=-90, looseness=0.50] (330.center) to (137.center);
		\draw [in=90, out=-90, looseness=0.50] (337.center) to (303.center);
		\draw [in=90, out=-90, looseness=0.50] (301.center) to (335.center);
		\draw [in=-90, out=90, looseness=0.50] (328.center) to (135.center);
	\end{pgfonlayer}
\end{tikzpicture}, \]  and again the associativity, frobenius, and copy laws all follow from the defining equations of compact closure for $\cat{D}$. 
\end{example}

For the next class of examples we will need to introduce locally-applicable transformations.

\begin{definition}[Locally-applicable tranformations]
A locally-applicable transformation of type \[S: (a \Rightarrow a') \rightarrow (b \Rightarrow b')\] is a family of functions \[   S_{X,X'} :   \cat{C}(a \otimes x, a' \otimes x') \rightarrow \cat{C}(b \otimes x , b' \otimes x')  \] satisfying  \[   \begin{tikzpicture}[scale=0.6, {every node/.style}={scale=0.8}]
	\begin{pgfonlayer}{nodelayer}
		\node [style=none] (1) at (-1.5, 1.25) {};
		\node [style=none] (2) at (1.25, 1.25) {};
		\node [style=none] (9) at (1.25, -1.25) {};
		\node [style=none] (10) at (-1.5, -1.25) {};
		\node [style=none] (11) at (-1.25, 0) {$S$};
		\node [style=none] (49) at (0.75, 0.75) {};
		\node [style=none] (52) at (-1, 0.75) {};
		\node [style=none] (53) at (-1, -0.75) {};
		\node [style=none] (54) at (0.75, -0.75) {};
		\node [style=none] (184) at (-0.750001, 0.275) {};
		\node [style=none] (185) at (-0.750001, -0.275) {};
		\node [style=none] (186) at (0.499999, 0.275) {};
		\node [style=none] (187) at (0.499999, -0.275) {};
		\node [style=none] (188) at (-0.500001, 0.75) {};
		\node [style=none] (189) at (-0.500001, -0.75) {};
		\node [style=none] (190) at (-0.500001, 0.275) {};
		\node [style=none] (191) at (-0.500001, -0.275) {};
		\node [style=none] (192) at (-0.125001, 0) {$\phi$};
		\node [style=none] (193) at (-1.075, 3) {};
		\node [style=none] (194) at (-1.075, 1.25) {};
		\node [style=none] (195) at (-1.075, -1.25) {};
		\node [style=none] (196) at (-1.075, -3) {};
		\node [style=none] (197) at (0.25, 0.75) {};
		\node [style=none] (198) at (0.25, -0.75) {};
		\node [style=none] (199) at (0.25, 0.275) {};
		\node [style=none] (200) at (0.25, -0.275) {};
		\node [style=none] (201) at (0.75, 1.975) {};
		\node [style=none] (202) at (0.25, 1.25) {};
		\node [style=none] (203) at (0.25, -1.25) {};
		\node [style=none] (204) at (0.75, -1.975) {};
		\node [style=none] (205) at (0.25, 1.25) {};
		\node [style=none] (206) at (0.5, -1.975) {};
		\node [style=none] (207) at (0.5, -2.525) {};
		\node [style=none] (208) at (1.75, -1.975) {};
		\node [style=none] (209) at (1.75, -2.525) {};
		\node [style=none] (210) at (0.75, -1.975) {};
		\node [style=none] (211) at (0.75, -2.525) {};
		\node [style=none] (212) at (1.125, -2.25) {$f$};
		\node [style=none] (213) at (1.5, -1.975) {};
		\node [style=none] (214) at (1.5, -2.525) {};
		\node [style=none] (215) at (0.5, 2.525) {};
		\node [style=none] (216) at (0.5, 1.975) {};
		\node [style=none] (217) at (1.75, 2.525) {};
		\node [style=none] (218) at (1.75, 1.975) {};
		\node [style=none] (219) at (0.75, 2.525) {};
		\node [style=none] (220) at (0.75, 1.975) {};
		\node [style=none] (221) at (1.125, 2.25) {$g$};
		\node [style=none] (222) at (1.5, 2.525) {};
		\node [style=none] (223) at (1.5, 1.975) {};
		\node [style=none] (224) at (1.5, 3) {};
		\node [style=none] (225) at (1.5, -3) {};
	\end{pgfonlayer}
	\begin{pgfonlayer}{edgelayer}
		\draw [style={supermap_shade}] (1.center)
			 to (10.center)
			 to (9.center)
			 to (2.center)
			 to cycle;
		\draw [style={hole_shade}] (54.center)
			 to (53.center)
			 to (52.center)
			 to (49.center)
			 to cycle;
		\draw [style={morphism_shade}] (184.center)
			 to (186.center)
			 to (187.center)
			 to (185.center)
			 to cycle;
		\draw (190.center) to (188.center);
		\draw (191.center) to (189.center);
		\draw (194.center) to (193.center);
		\draw (196.center) to (195.center);
		\draw (199.center) to (197.center);
		\draw (200.center) to (198.center);
		\draw [in=270, out=90] (202.center) to (201.center);
		\draw [in=270, out=90] (204.center) to (203.center);
		\draw [style=dottededge] (205.center) to (197.center);
		\draw [style=dottededge] (198.center) to (203.center);
		\draw [style={morphism_shade}] (206.center)
			 to (208.center)
			 to (209.center)
			 to (207.center)
			 to cycle;
		\draw [style={morphism_shade}] (215.center)
			 to (217.center)
			 to (218.center)
			 to (216.center)
			 to cycle;
		\draw (223.center) to (213.center);
		\draw (224.center) to (222.center);
		\draw (225.center) to (214.center);
	\end{pgfonlayer}
\end{tikzpicture} \ = \   \begin{tikzpicture}[scale=0.6, {every node/.style}={scale=0.8}]
	\begin{pgfonlayer}{nodelayer}
		\node [style=none] (1) at (-2, 2.5) {};
		\node [style=none] (2) at (2, 2.5) {};
		\node [style=none] (9) at (2, -2.5) {};
		\node [style=none] (10) at (-2, -2.5) {};
		\node [style=none] (11) at (-1.75, 0) {$S$};
		\node [style=none] (49) at (1.5, 2) {};
		\node [style=none] (52) at (-1.5, 2) {};
		\node [style=none] (53) at (-1.5, -2) {};
		\node [style=none] (54) at (1.5, -2) {};
		\node [style=none] (184) at (-1.25, 0.275) {};
		\node [style=none] (185) at (-1.25, -0.275) {};
		\node [style=none] (186) at (0, 0.275) {};
		\node [style=none] (187) at (0, -0.275) {};
		\node [style=none] (188) at (-1, 2) {};
		\node [style=none] (189) at (-1, -2) {};
		\node [style=none] (190) at (-1, 0.275) {};
		\node [style=none] (191) at (-1, -0.275) {};
		\node [style=none] (192) at (-0.625001, 0) {$\phi$};
		\node [style=none] (193) at (-1.575, 3) {};
		\node [style=none] (194) at (-1.575, 2.5) {};
		\node [style=none] (195) at (-1.575, -2.5) {};
		\node [style=none] (196) at (-1.575, -3) {};
		\node [style=none] (199) at (-0.25, 0.25) {};
		\node [style=none] (201) at (0.25, 1) {};
		\node [style=none] (202) at (-0.25, 0.275) {};
		\node [style=none] (203) at (-0.25, -0.275) {};
		\node [style=none] (204) at (0.25, -1) {};
		\node [style=none] (206) at (0, -0.975) {};
		\node [style=none] (207) at (0, -1.525) {};
		\node [style=none] (208) at (1.25, -0.975) {};
		\node [style=none] (209) at (1.25, -1.525) {};
		\node [style=none] (210) at (0.25, -0.975) {};
		\node [style=none] (211) at (0.25, -1.525) {};
		\node [style=none] (212) at (0.625, -1.25) {$f$};
		\node [style=none] (213) at (1, -0.975) {};
		\node [style=none] (214) at (1, -1.525) {};
		\node [style=none] (215) at (0, 1.525) {};
		\node [style=none] (216) at (0, 0.975) {};
		\node [style=none] (217) at (1.25, 1.525) {};
		\node [style=none] (218) at (1.25, 0.975) {};
		\node [style=none] (219) at (0.25, 1.525) {};
		\node [style=none] (220) at (0.25, 0.975) {};
		\node [style=none] (221) at (0.625, 1.25) {$g$};
		\node [style=none] (222) at (1, 1.525) {};
		\node [style=none] (223) at (1, 0.975) {};
		\node [style=none] (224) at (1, 2) {};
		\node [style=none] (225) at (1, -2) {};
		\node [style=none] (226) at (1, 2.5) {};
		\node [style=none] (227) at (1, 3) {};
		\node [style=none] (228) at (1, -2.5) {};
		\node [style=none] (229) at (1, -3) {};
	\end{pgfonlayer}
	\begin{pgfonlayer}{edgelayer}
		\draw [style={supermap_shade}] (1.center)
			 to (10.center)
			 to (9.center)
			 to (2.center)
			 to cycle;
		\draw [style={hole_shade}] (54.center)
			 to (53.center)
			 to (52.center)
			 to (49.center)
			 to cycle;
		\draw [style={morphism_shade}] (184.center)
			 to (186.center)
			 to (187.center)
			 to (185.center)
			 to cycle;
		\draw (190.center) to (188.center);
		\draw (191.center) to (189.center);
		\draw (194.center) to (193.center);
		\draw (196.center) to (195.center);
		\draw [in=270, out=90] (202.center) to (201.center);
		\draw [in=270, out=90] (204.center) to (203.center);
		\draw [style={morphism_shade}] (206.center)
			 to (208.center)
			 to (209.center)
			 to (207.center)
			 to cycle;
		\draw [style={morphism_shade}] (215.center)
			 to (217.center)
			 to (218.center)
			 to (216.center)
			 to cycle;
		\draw (223.center) to (213.center);
		\draw (224.center) to (222.center);
		\draw (225.center) to (214.center);
		\draw (229.center) to (228.center);
		\draw (227.center) to (226.center);
		\draw [style=dottededge] (226.center) to (224.center);
		\draw [style=dottededge] (225.center) to (228.center);
	\end{pgfonlayer}
\end{tikzpicture} . \]
Furthermore,  a multi-input locally-applicable tranformation of type \[    S:   \otimes_i (a_i \Rightarrow a_i') \rightarrow (b \Rightarrow b')  \] is a family of functions \[S_{x_1 x_1' x_2 x_2' }:  \times_i \cat{C}(a_i \otimes x_i , a_i'   \otimes x_i'  ) \rightarrow \cat{C}(b \otimes_i (x_i)  , b'   \otimes_i (x_i' ) )\] satisfying  \[   \begin{tikzpicture}[scale=0.6, {every node/.style}={scale=0.8}]
	\begin{pgfonlayer}{nodelayer}
		\node [style=none] (1) at (-1.5, 1.25) {};
		\node [style=none] (2) at (4.25, 1.25) {};
		\node [style=none] (9) at (4.25, -1.25) {};
		\node [style=none] (10) at (-1.5, -1.25) {};
		\node [style=none] (11) at (-1.25, 0) {$S$};
		\node [style=none] (49) at (0.75, 0.75) {};
		\node [style=none] (52) at (-1, 0.75) {};
		\node [style=none] (53) at (-1, -0.75) {};
		\node [style=none] (54) at (0.75, -0.75) {};
		\node [style=none] (184) at (-0.750001, 0.275) {};
		\node [style=none] (185) at (-0.750001, -0.275) {};
		\node [style=none] (186) at (0.499999, 0.275) {};
		\node [style=none] (187) at (0.499999, -0.275) {};
		\node [style=none] (188) at (-0.500001, 0.75) {};
		\node [style=none] (189) at (-0.500001, -0.75) {};
		\node [style=none] (190) at (-0.500001, 0.275) {};
		\node [style=none] (191) at (-0.500001, -0.275) {};
		\node [style=none] (192) at (-0.125001, 0) {$\phi_1$};
		\node [style=none] (193) at (-1.075, 3) {};
		\node [style=none] (194) at (-1.075, 1.25) {};
		\node [style=none] (195) at (-1.075, -1.25) {};
		\node [style=none] (196) at (-1.075, -3) {};
		\node [style=none] (197) at (0.25, 0.75) {};
		\node [style=none] (198) at (0.25, -0.75) {};
		\node [style=none] (199) at (0.25, 0.275) {};
		\node [style=none] (200) at (0.25, -0.275) {};
		\node [style=none] (201) at (0.75, 1.975) {};
		\node [style=none] (202) at (0.25, 1.25) {};
		\node [style=none] (203) at (0.25, -1.25) {};
		\node [style=none] (204) at (0.75, -1.975) {};
		\node [style=none] (205) at (0.25, 1.25) {};
		\node [style=none] (206) at (0.5, -1.975) {};
		\node [style=none] (207) at (0.5, -2.525) {};
		\node [style=none] (208) at (1.75, -1.975) {};
		\node [style=none] (209) at (1.75, -2.525) {};
		\node [style=none] (210) at (0.75, -1.975) {};
		\node [style=none] (211) at (0.75, -2.525) {};
		\node [style=none] (212) at (1.125, -2.25) {$f_1$};
		\node [style=none] (213) at (1.5, -1.975) {};
		\node [style=none] (214) at (1.5, -2.525) {};
		\node [style=none] (215) at (0.5, 2.525) {};
		\node [style=none] (216) at (0.5, 1.975) {};
		\node [style=none] (217) at (1.75, 2.525) {};
		\node [style=none] (218) at (1.75, 1.975) {};
		\node [style=none] (219) at (0.75, 2.525) {};
		\node [style=none] (220) at (0.75, 1.975) {};
		\node [style=none] (221) at (1.125, 2.25) {$g_1$};
		\node [style=none] (222) at (1.5, 2.525) {};
		\node [style=none] (223) at (1.5, 1.975) {};
		\node [style=none] (224) at (1.5, 3) {};
		\node [style=none] (225) at (1.5, -3) {};
		\node [style=none] (227) at (4.325, -1.25) {};
		\node [style=none] (228) at (3.825, 0.75) {};
		\node [style=none] (229) at (2.075, 0.75) {};
		\node [style=none] (230) at (2.075, -0.75) {};
		\node [style=none] (231) at (3.825, -0.75) {};
		\node [style=none] (232) at (2.325, 0.275) {};
		\node [style=none] (233) at (2.325, -0.275) {};
		\node [style=none] (234) at (3.575, 0.275) {};
		\node [style=none] (235) at (3.575, -0.275) {};
		\node [style=none] (236) at (2.575, 0.75) {};
		\node [style=none] (237) at (2.575, -0.75) {};
		\node [style=none] (238) at (2.575, 0.275) {};
		\node [style=none] (239) at (2.575, -0.275) {};
		\node [style=none] (240) at (2.95, 0) {$\phi_2$};
		\node [style=none] (245) at (3.325, 0.75) {};
		\node [style=none] (246) at (3.325, -0.75) {};
		\node [style=none] (247) at (3.325, 0.275) {};
		\node [style=none] (248) at (3.325, -0.275) {};
		\node [style=none] (249) at (3.825, 1.975) {};
		\node [style=none] (250) at (3.325, 1.25) {};
		\node [style=none] (251) at (3.325, -1.25) {};
		\node [style=none] (252) at (3.825, -1.975) {};
		\node [style=none] (253) at (3.325, 1.25) {};
		\node [style=none] (254) at (3.575, -1.975) {};
		\node [style=none] (255) at (3.575, -2.525) {};
		\node [style=none] (256) at (4.825, -1.975) {};
		\node [style=none] (257) at (4.825, -2.525) {};
		\node [style=none] (258) at (3.825, -1.975) {};
		\node [style=none] (259) at (3.825, -2.525) {};
		\node [style=none] (260) at (4.2, -2.25) {$f_2$};
		\node [style=none] (261) at (4.575, -1.975) {};
		\node [style=none] (262) at (4.575, -2.525) {};
		\node [style=none] (263) at (3.575, 2.525) {};
		\node [style=none] (264) at (3.575, 1.975) {};
		\node [style=none] (265) at (4.825, 2.525) {};
		\node [style=none] (266) at (4.825, 1.975) {};
		\node [style=none] (267) at (3.825, 2.525) {};
		\node [style=none] (268) at (3.825, 1.975) {};
		\node [style=none] (269) at (4.2, 2.25) {$g_2$};
		\node [style=none] (270) at (4.575, 2.525) {};
		\node [style=none] (271) at (4.575, 1.975) {};
		\node [style=none] (272) at (4.575, 3) {};
		\node [style=none] (273) at (4.575, -3) {};
	\end{pgfonlayer}
	\begin{pgfonlayer}{edgelayer}
		\draw [style={supermap_shade}] (1.center)
			 to (10.center)
			 to (9.center)
			 to (2.center)
			 to cycle;
		\draw [style={hole_shade}] (54.center)
			 to (53.center)
			 to (52.center)
			 to (49.center)
			 to cycle;
		\draw [style={morphism_shade}] (184.center)
			 to (186.center)
			 to (187.center)
			 to (185.center)
			 to cycle;
		\draw (190.center) to (188.center);
		\draw (191.center) to (189.center);
		\draw (194.center) to (193.center);
		\draw (196.center) to (195.center);
		\draw (199.center) to (197.center);
		\draw (200.center) to (198.center);
		\draw [in=270, out=90] (202.center) to (201.center);
		\draw [in=270, out=90] (204.center) to (203.center);
		\draw [style=dottededge] (205.center) to (197.center);
		\draw [style=dottededge] (198.center) to (203.center);
		\draw [style={morphism_shade}] (206.center)
			 to (208.center)
			 to (209.center)
			 to (207.center)
			 to cycle;
		\draw [style={morphism_shade}] (215.center)
			 to (217.center)
			 to (218.center)
			 to (216.center)
			 to cycle;
		\draw (223.center) to (213.center);
		\draw (224.center) to (222.center);
		\draw (225.center) to (214.center);
		\draw [style={hole_shade}] (231.center)
			 to (230.center)
			 to (229.center)
			 to (228.center)
			 to cycle;
		\draw [style={morphism_shade}] (232.center)
			 to (234.center)
			 to (235.center)
			 to (233.center)
			 to cycle;
		\draw (238.center) to (236.center);
		\draw (239.center) to (237.center);
		\draw (247.center) to (245.center);
		\draw (248.center) to (246.center);
		\draw [in=270, out=90] (250.center) to (249.center);
		\draw [in=270, out=90] (252.center) to (251.center);
		\draw [style=dottededge] (253.center) to (245.center);
		\draw [style=dottededge] (246.center) to (251.center);
		\draw [style={morphism_shade}] (254.center)
			 to (256.center)
			 to (257.center)
			 to (255.center)
			 to cycle;
		\draw [style={morphism_shade}] (263.center)
			 to (265.center)
			 to (266.center)
			 to (264.center)
			 to cycle;
		\draw (271.center) to (261.center);
		\draw (272.center) to (270.center);
		\draw (273.center) to (262.center);
	\end{pgfonlayer}
\end{tikzpicture} \ = \ \begin{tikzpicture}[scale=0.6, {every node/.style}={scale=0.8}]
	\begin{pgfonlayer}{nodelayer}
		\node [style=none] (1) at (-4.25, 2.5) {};
		\node [style=none] (2) at (3.825, 2.5) {};
		\node [style=none] (9) at (3.825, -2.5) {};
		\node [style=none] (10) at (-4.25, -2.5) {};
		\node [style=none] (11) at (-4, 0) {$S$};
		\node [style=none] (49) at (-0.75, 2) {};
		\node [style=none] (52) at (-3.75, 2) {};
		\node [style=none] (53) at (-3.75, -2) {};
		\node [style=none] (54) at (-0.75, -2) {};
		\node [style=none] (184) at (-3.5, 0.275) {};
		\node [style=none] (185) at (-3.5, -0.275) {};
		\node [style=none] (186) at (-2.25, 0.275) {};
		\node [style=none] (187) at (-2.25, -0.275) {};
		\node [style=none] (188) at (-3.25, 2) {};
		\node [style=none] (189) at (-3.25, -2) {};
		\node [style=none] (190) at (-3.25, 0.275) {};
		\node [style=none] (191) at (-3.25, -0.275) {};
		\node [style=none] (192) at (-2.875, 0) {$\phi$};
		\node [style=none] (193) at (-3.825, 3) {};
		\node [style=none] (194) at (-3.825, 2.5) {};
		\node [style=none] (195) at (-3.825, -2.5) {};
		\node [style=none] (196) at (-3.825, -3) {};
		\node [style=none] (201) at (-2, 0.975) {};
		\node [style=none] (202) at (-2.5, 0.275) {};
		\node [style=none] (203) at (-2.5, -0.275) {};
		\node [style=none] (204) at (-2, -0.975) {};
		\node [style=none] (206) at (-2.25, -0.975) {};
		\node [style=none] (207) at (-2.25, -1.525) {};
		\node [style=none] (208) at (-1, -0.975) {};
		\node [style=none] (209) at (-1, -1.525) {};
		\node [style=none] (211) at (-2, -1.525) {};
		\node [style=none] (212) at (-1.625, -1.25) {$f$};
		\node [style=none] (213) at (-1.25, -0.975) {};
		\node [style=none] (214) at (-1.25, -1.525) {};
		\node [style=none] (215) at (-2.25, 1.525) {};
		\node [style=none] (216) at (-2.25, 0.975) {};
		\node [style=none] (217) at (-1, 1.525) {};
		\node [style=none] (218) at (-1, 0.975) {};
		\node [style=none] (219) at (-2, 1.525) {};
		\node [style=none] (221) at (-1.625, 1.25) {$g$};
		\node [style=none] (222) at (-1.25, 1.525) {};
		\node [style=none] (223) at (-1.25, 0.975) {};
		\node [style=none] (224) at (-1.25, 2) {};
		\node [style=none] (225) at (-1.25, -2) {};
		\node [style=none] (226) at (-1.25, 2.5) {};
		\node [style=none] (227) at (-1.25, 3) {};
		\node [style=none] (228) at (-1.25, -2.5) {};
		\node [style=none] (229) at (-1.25, -3) {};
		\node [style=none] (230) at (3.325, 2) {};
		\node [style=none] (231) at (0.325, 2) {};
		\node [style=none] (232) at (0.325, -2) {};
		\node [style=none] (233) at (3.325, -2) {};
		\node [style=none] (234) at (0.575, 0.275) {};
		\node [style=none] (235) at (0.575, -0.275) {};
		\node [style=none] (236) at (1.825, 0.275) {};
		\node [style=none] (237) at (1.825, -0.275) {};
		\node [style=none] (238) at (0.825, 2) {};
		\node [style=none] (239) at (0.825, -2) {};
		\node [style=none] (240) at (0.825, 0.275) {};
		\node [style=none] (241) at (0.825, -0.275) {};
		\node [style=none] (242) at (1.2, 0) {$\phi$};
		\node [style=none] (249) at (2.075, 0.975) {};
		\node [style=none] (250) at (1.575, 0.275) {};
		\node [style=none] (251) at (1.575, -0.275) {};
		\node [style=none] (252) at (2.075, -0.975) {};
		\node [style=none] (254) at (1.825, -0.975) {};
		\node [style=none] (255) at (1.825, -1.525) {};
		\node [style=none] (256) at (3.075, -0.975) {};
		\node [style=none] (257) at (3.075, -1.525) {};
		\node [style=none] (259) at (2.075, -1.525) {};
		\node [style=none] (260) at (2.45, -1.25) {$f$};
		\node [style=none] (261) at (2.825, -0.975) {};
		\node [style=none] (262) at (2.825, -1.525) {};
		\node [style=none] (263) at (1.825, 1.525) {};
		\node [style=none] (264) at (1.825, 0.975) {};
		\node [style=none] (265) at (3.075, 1.525) {};
		\node [style=none] (266) at (3.075, 0.975) {};
		\node [style=none] (267) at (2.075, 1.525) {};
		\node [style=none] (269) at (2.45, 1.25) {$g$};
		\node [style=none] (270) at (2.825, 1.525) {};
		\node [style=none] (271) at (2.825, 0.975) {};
		\node [style=none] (272) at (2.825, 2) {};
		\node [style=none] (273) at (2.825, -2) {};
		\node [style=none] (274) at (2.825, 2.5) {};
		\node [style=none] (275) at (2.825, 3) {};
		\node [style=none] (276) at (2.825, -2.5) {};
		\node [style=none] (277) at (2.825, -3) {};
	\end{pgfonlayer}
	\begin{pgfonlayer}{edgelayer}
		\draw [style={supermap_shade}] (1.center)
			 to (10.center)
			 to (9.center)
			 to (2.center)
			 to cycle;
		\draw [style={hole_shade}] (54.center)
			 to (53.center)
			 to (52.center)
			 to (49.center)
			 to cycle;
		\draw [style={morphism_shade}] (184.center)
			 to (186.center)
			 to (187.center)
			 to (185.center)
			 to cycle;
		\draw (190.center) to (188.center);
		\draw (191.center) to (189.center);
		\draw (194.center) to (193.center);
		\draw (196.center) to (195.center);
		\draw [in=270, out=90] (202.center) to (201.center);
		\draw [in=270, out=90] (204.center) to (203.center);
		\draw [style={morphism_shade}] (206.center)
			 to (208.center)
			 to (209.center)
			 to (207.center)
			 to cycle;
		\draw [style={morphism_shade}] (215.center)
			 to (217.center)
			 to (218.center)
			 to (216.center)
			 to cycle;
		\draw (223.center) to (213.center);
		\draw (224.center) to (222.center);
		\draw (225.center) to (214.center);
		\draw (229.center) to (228.center);
		\draw (227.center) to (226.center);
		\draw [style=dottededge] (226.center) to (224.center);
		\draw [style=dottededge] (225.center) to (228.center);
		\draw [style={hole_shade}] (233.center)
			 to (232.center)
			 to (231.center)
			 to (230.center)
			 to cycle;
		\draw [style={morphism_shade}] (234.center)
			 to (236.center)
			 to (237.center)
			 to (235.center)
			 to cycle;
		\draw (240.center) to (238.center);
		\draw (241.center) to (239.center);
		\draw [in=270, out=90] (250.center) to (249.center);
		\draw [in=270, out=90] (252.center) to (251.center);
		\draw [style={morphism_shade}] (254.center)
			 to (256.center)
			 to (257.center)
			 to (255.center)
			 to cycle;
		\draw [style={morphism_shade}] (263.center)
			 to (265.center)
			 to (266.center)
			 to (264.center)
			 to cycle;
		\draw (271.center) to (261.center);
		\draw (272.center) to (270.center);
		\draw (273.center) to (262.center);
		\draw (277.center) to (276.center);
		\draw (275.center) to (274.center);
		\draw [style=dottededge] (274.center) to (272.center);
		\draw [style=dottededge] (273.center) to (276.center);
	\end{pgfonlayer}
\end{tikzpicture}.    \]
\end{definition}

It has been proven that the quantum supermaps \cite{Chiribella2008TransformingSupermaps}  (or as a special case, the process matrices \cite{Oreshkov2012QuantumOrder}) are in one-to-one correspondence with the locally-applciable transformations on the symmetric monoidal category of quantum channels \cite{wilson2025quantumsupermapscharacterizedlocality}. For any symmetric monoidal category $\mathcal{C}$ it is straightforward to see that $\mathcal{C}$ is symmetric monoidal enriched in a multi-category of locally-applicable transformations $\mathbf{Lot}[\mathcal{C}]$ as outlined in \cite{wilson2025quantumsupermapscharacterizedlocality}.
In-fact, the theory of locally-applicable transformations can be identified as a fragment of the theory of strong profunctors \cite{hefford_supermaps}, which gives a general construction technique for constructing duoidal \cite{hefford_supermaps} and BV \cite{hefford2025bvcategoryspacetimeinterventions} categories of higher-order processes. 

The locally-applicable transformations are actually quite poorly behaved compositionally, and do not in general form even a symmetric polycategory. There are, however, procedures for carving out symmetric polycategories from the locally-applicable transformations.
\begin{example}
In general the locally-applicable transformations fail to form a polycategory because of the failure of commutation of locally-applicable transformations when applied to parts of bipartite processes \cite{wilson_polycategories}. This can be remedied by keeping only those locally-applicable transformations which when reduced to a single hole become central, that is, only those transformations which do commute with any other locally-applicable transformation are kept. These transformations were referred to in \cite{wilson_polycategories} as slots, and were proven to recover the quantum superunitaries when applied to the monoidal category of unitary linear maps. 

More precisely, a slot (strongly locally-applicable transformation) \[S: (a \Rightarrow a') \rightarrow (b \Rightarrow b'), \] is a locally-applicable transformation of the same type such that for any locally applicable transformation \[T: (c \Rightarrow c') \rightarrow (d \Rightarrow d'), \] then 
 \[   \begin{tikzpicture}[scale=0.6, {every node/.style}={scale=0.8}]
	\begin{pgfonlayer}{nodelayer}
		\node [style=none] (1) at (-3, 2.5) {};
		\node [style=none] (2) at (2.5, 2.5) {};
		\node [style=none] (9) at (2.5, -3) {};
		\node [style=none] (10) at (-3, -3) {};
		\node [style=none] (11) at (-2.75, 0) {$S$};
		\node [style=none] (49) at (2, 2) {};
		\node [style=none] (52) at (-2.5, 2) {};
		\node [style=none] (53) at (-2.5, -2.5) {};
		\node [style=none] (54) at (2, -2.5) {};
		\node [style=none] (188) at (-1, 2) {};
		\node [style=none] (189) at (-1, -2.5) {};
		\node [style=none] (193) at (-1, 3) {};
		\node [style=none] (194) at (-1, 2.5) {};
		\node [style=none] (195) at (-1, -3) {};
		\node [style=none] (196) at (-1, -3.5) {};
		\node [style=none] (201) at (-0.25, 2) {};
		\node [style=none] (204) at (-0.25, -2.5) {};
		\node [style=none] (224) at (-0.25, 2) {};
		\node [style=none] (225) at (-0.25, -2.5) {};
		\node [style=none] (226) at (-0.25, 2.5) {};
		\node [style=none] (227) at (-0.25, 3) {};
		\node [style=none] (228) at (-0.25, -3) {};
		\node [style=none] (229) at (-0.25, -3.5) {};
		\node [style=none] (231) at (0.5, 2) {};
		\node [style=none] (234) at (0.5, -2.5) {};
		\node [style=none] (235) at (0.5, 2) {};
		\node [style=none] (236) at (0.5, -2.5) {};
		\node [style=none] (237) at (0.5, 2.5) {};
		\node [style=none] (238) at (0.5, 3) {};
		\node [style=none] (239) at (0.5, -3) {};
		\node [style=none] (240) at (0.5, -3.5) {};
		\node [style=none] (241) at (1.5, 1.25) {};
		\node [style=none] (242) at (-2, 1.25) {};
		\node [style=none] (243) at (-2, -1.75) {};
		\node [style=none] (244) at (1.5, -1.75) {};
		\node [style=none] (245) at (1.25, -0.25) {$T$};
		\node [style=none] (246) at (-1.5, 0.75) {};
		\node [style=none] (247) at (1, 0.75) {};
		\node [style=none] (248) at (1, -1.25) {};
		\node [style=none] (249) at (-1.5, -1.25) {};
		\node [style=none] (250) at (0.75, 0.025) {};
		\node [style=none] (251) at (0.75, -0.525) {};
		\node [style=none] (252) at (-1.25, 0.025) {};
		\node [style=none] (253) at (-1.25, -0.525) {};
		\node [style=none] (254) at (0.5, 0.75) {};
		\node [style=none] (255) at (0.5, -1.25) {};
		\node [style=none] (256) at (0.5, 0.025) {};
		\node [style=none] (257) at (0.5, -0.525) {};
		\node [style=none] (258) at (-0.249999, -0.25) {$\phi$};
		\node [style=none] (259) at (0.5, 2) {};
		\node [style=none] (260) at (0.5, 1.25) {};
		\node [style=none] (261) at (0.5, -1.75) {};
		\node [style=none] (262) at (0.5, -2.5) {};
		\node [style=none] (263) at (-0.25, 0) {};
		\node [style=none] (264) at (-0.25, 0.75) {};
		\node [style=none] (265) at (-0.25, 0.025) {};
		\node [style=none] (266) at (-0.25, -0.525) {};
		\node [style=none] (267) at (-0.25, -1.25) {};
		\node [style=none] (268) at (-0.25, 0.75) {};
		\node [style=none] (269) at (-0.25, -1.25) {};
		\node [style=none] (270) at (-0.25, 1.25) {};
		\node [style=none] (271) at (-0.25, 2) {};
		\node [style=none] (272) at (-0.25, -1.75) {};
		\node [style=none] (273) at (-0.25, -2.5) {};
		\node [style=none] (274) at (-1, 0) {};
		\node [style=none] (275) at (-1, 0.75) {};
		\node [style=none] (276) at (-1, 0.025) {};
		\node [style=none] (277) at (-1, -0.525) {};
		\node [style=none] (278) at (-1, -1.25) {};
		\node [style=none] (279) at (-1, 0.75) {};
		\node [style=none] (280) at (-1, -1.25) {};
		\node [style=none] (281) at (-1, 1.25) {};
		\node [style=none] (282) at (-1, 2) {};
		\node [style=none] (283) at (-1, -1.75) {};
		\node [style=none] (284) at (-1, -2.5) {};
	\end{pgfonlayer}
	\begin{pgfonlayer}{edgelayer}
		\draw [style={supermap_shade}] (1.center)
			 to (10.center)
			 to (9.center)
			 to (2.center)
			 to cycle;
		\draw [style={hole_shade}] (54.center)
			 to (53.center)
			 to (52.center)
			 to (49.center)
			 to cycle;
		\draw (194.center) to (193.center);
		\draw (196.center) to (195.center);
		\draw (229.center) to (228.center);
		\draw (227.center) to (226.center);
		\draw [style=dottededge] (226.center) to (224.center);
		\draw [style=dottededge] (225.center) to (228.center);
		\draw (240.center) to (239.center);
		\draw (238.center) to (237.center);
		\draw [style=dottededge] (237.center) to (235.center);
		\draw [style=dottededge] (236.center) to (239.center);
		\draw [style={supermap_shade}] (241.center)
			 to (244.center)
			 to (243.center)
			 to (242.center)
			 to cycle;
		\draw [style={hole_shade}] (249.center)
			 to (248.center)
			 to (247.center)
			 to (246.center)
			 to cycle;
		\draw [style={morphism_shade}] (250.center)
			 to (252.center)
			 to (253.center)
			 to (251.center)
			 to cycle;
		\draw (256.center) to (254.center);
		\draw (257.center) to (255.center);
		\draw (260.center) to (259.center);
		\draw (262.center) to (261.center);
		\draw [in=-90, out=90] (265.center) to (264.center);
		\draw [in=-90, out=90] (267.center) to (266.center);
		\draw (273.center) to (272.center);
		\draw (271.center) to (270.center);
		\draw [style=dottededge] (270.center) to (268.center);
		\draw [style=dottededge] (269.center) to (272.center);
		\draw [in=-90, out=90] (276.center) to (275.center);
		\draw [in=-90, out=90] (278.center) to (277.center);
		\draw (284.center) to (283.center);
		\draw (282.center) to (281.center);
		\draw [style=dottededge] (281.center) to (279.center);
		\draw [style=dottededge] (280.center) to (283.center);
	\end{pgfonlayer}
\end{tikzpicture} \ = \ \begin{tikzpicture}[scale=0.6, {every node/.style}={scale=0.8}]
	\begin{pgfonlayer}{nodelayer}
		\node [style=none] (1) at (2.5, 2.5) {};
		\node [style=none] (2) at (-3, 2.5) {};
		\node [style=none] (9) at (-3, -3) {};
		\node [style=none] (10) at (2.5, -3) {};
		\node [style=none] (11) at (-1.75, 0) {$S$};
		\node [style=none] (49) at (-2.5, 2) {};
		\node [style=none] (52) at (2, 2) {};
		\node [style=none] (53) at (2, -2.5) {};
		\node [style=none] (54) at (-2.5, -2.5) {};
		\node [style=none] (188) at (0.5, 2) {};
		\node [style=none] (189) at (0.5, -2.5) {};
		\node [style=none] (193) at (0.5, 3) {};
		\node [style=none] (194) at (0.5, 2.5) {};
		\node [style=none] (195) at (0.5, -3) {};
		\node [style=none] (196) at (0.5, -3.5) {};
		\node [style=none] (201) at (-0.25, 2) {};
		\node [style=none] (204) at (-0.25, -2.5) {};
		\node [style=none] (224) at (-0.25, 2) {};
		\node [style=none] (225) at (-0.25, -2.5) {};
		\node [style=none] (226) at (-0.25, 2.5) {};
		\node [style=none] (227) at (-0.25, 3) {};
		\node [style=none] (228) at (-0.25, -3) {};
		\node [style=none] (229) at (-0.25, -3.5) {};
		\node [style=none] (231) at (-1, 2) {};
		\node [style=none] (234) at (-1, -2.5) {};
		\node [style=none] (235) at (-1, 2) {};
		\node [style=none] (236) at (-1, -2.5) {};
		\node [style=none] (237) at (-1, 2.5) {};
		\node [style=none] (238) at (-1, 3) {};
		\node [style=none] (239) at (-1, -3) {};
		\node [style=none] (240) at (-1, -3.5) {};
		\node [style=none] (241) at (-2, 1.25) {};
		\node [style=none] (242) at (1.5, 1.25) {};
		\node [style=none] (243) at (1.5, -1.75) {};
		\node [style=none] (244) at (-2, -1.75) {};
		\node [style=none] (245) at (2.25, -0.25) {$T$};
		\node [style=none] (246) at (1, 0.75) {};
		\node [style=none] (247) at (-1.5, 0.75) {};
		\node [style=none] (248) at (-1.5, -1.25) {};
		\node [style=none] (249) at (1, -1.25) {};
		\node [style=none] (250) at (-1.25, 0.025) {};
		\node [style=none] (251) at (-1.25, -0.525) {};
		\node [style=none] (252) at (0.75, 0.025) {};
		\node [style=none] (253) at (0.75, -0.525) {};
		\node [style=none] (254) at (-1, 0.75) {};
		\node [style=none] (255) at (-1, -1.25) {};
		\node [style=none] (256) at (-1, 0.025) {};
		\node [style=none] (257) at (-1, -0.525) {};
		\node [style=none] (258) at (-0.250001, -0.25) {$\phi$};
		\node [style=none] (259) at (-1, 2) {};
		\node [style=none] (260) at (-1, 1.25) {};
		\node [style=none] (261) at (-1, -1.75) {};
		\node [style=none] (262) at (-1, -2.5) {};
		\node [style=none] (263) at (-0.25, 0) {};
		\node [style=none] (264) at (-0.25, 0.75) {};
		\node [style=none] (265) at (-0.25, 0.025) {};
		\node [style=none] (266) at (-0.25, -0.525) {};
		\node [style=none] (267) at (-0.25, -1.25) {};
		\node [style=none] (268) at (-0.25, 0.75) {};
		\node [style=none] (269) at (-0.25, -1.25) {};
		\node [style=none] (270) at (-0.25, 1.25) {};
		\node [style=none] (271) at (-0.25, 2) {};
		\node [style=none] (272) at (-0.25, -1.75) {};
		\node [style=none] (273) at (-0.25, -2.5) {};
		\node [style=none] (274) at (0.5, 0) {};
		\node [style=none] (275) at (0.5, 0.75) {};
		\node [style=none] (276) at (0.5, 0.025) {};
		\node [style=none] (277) at (0.5, -0.525) {};
		\node [style=none] (278) at (0.5, -1.25) {};
		\node [style=none] (279) at (0.5, 0.75) {};
		\node [style=none] (280) at (0.5, -1.25) {};
		\node [style=none] (281) at (0.5, 1.25) {};
		\node [style=none] (282) at (0.5, 2) {};
		\node [style=none] (283) at (0.5, -1.75) {};
		\node [style=none] (284) at (0.5, -2.5) {};
	\end{pgfonlayer}
	\begin{pgfonlayer}{edgelayer}
		\draw [style={supermap_shade}] (1.center)
			 to (10.center)
			 to (9.center)
			 to (2.center)
			 to cycle;
		\draw [style={hole_shade}] (54.center)
			 to (53.center)
			 to (52.center)
			 to (49.center)
			 to cycle;
		\draw (194.center) to (193.center);
		\draw (196.center) to (195.center);
		\draw (229.center) to (228.center);
		\draw (227.center) to (226.center);
		\draw [style=dottededge] (226.center) to (224.center);
		\draw [style=dottededge] (225.center) to (228.center);
		\draw (240.center) to (239.center);
		\draw (238.center) to (237.center);
		\draw [style=dottededge] (237.center) to (235.center);
		\draw [style=dottededge] (236.center) to (239.center);
		\draw [style={supermap_shade}] (241.center)
			 to (244.center)
			 to (243.center)
			 to (242.center)
			 to cycle;
		\draw [style={hole_shade}] (249.center)
			 to (248.center)
			 to (247.center)
			 to (246.center)
			 to cycle;
		\draw [style={morphism_shade}] (250.center)
			 to (252.center)
			 to (253.center)
			 to (251.center)
			 to cycle;
		\draw (256.center) to (254.center);
		\draw (257.center) to (255.center);
		\draw (260.center) to (259.center);
		\draw (262.center) to (261.center);
		\draw [in=-90, out=90] (265.center) to (264.center);
		\draw [in=-90, out=90] (267.center) to (266.center);
		\draw (273.center) to (272.center);
		\draw (271.center) to (270.center);
		\draw [style=dottededge] (270.center) to (268.center);
		\draw [style=dottededge] (269.center) to (272.center);
		\draw [in=-90, out=90] (276.center) to (275.center);
		\draw [in=-90, out=90] (278.center) to (277.center);
		\draw (284.center) to (283.center);
		\draw (282.center) to (281.center);
		\draw [style=dottededge] (281.center) to (279.center);
		\draw [style=dottededge] (280.center) to (283.center);
	\end{pgfonlayer}
\end{tikzpicture}.    \]
Multi-input slots (with $N$-holes) are then defined inductively, as those such that when any one hole is filled what remains is an $N-1$-hole slot. Such slots were proven in \cite{wilson_polycategories} to form a polycategory, and it is easy to see that all of the maps used to define enrichment of $\mathcal{C}$ into locally-applicable transformations are themselves slots, and so, that $\mathcal{C}$ is enriched in the associated polycategory $\mathbf{pslot}[\mathcal{C}]$. 

What remains is to construct the cotensor  \[   \begin{tikzpicture}[scale=0.6, {every node/.style}={scale=0.8}]
	\begin{pgfonlayer}{nodelayer}
		\node [style={black_dot}] (11) at (-1.75, 1.25) {};
		\node [style=none] (188) at (0.5, 2.25) {};
		\node [style=none] (189) at (0.5, -2.25) {};
		\node [style=none] (201) at (-0.25, 2.25) {};
		\node [style=none] (204) at (-0.25, -2.25) {};
		\node [style=none] (224) at (-0.25, 2.25) {};
		\node [style=none] (225) at (-0.25, -2.25) {};
		\node [style=none] (231) at (-1, 2.25) {};
		\node [style=none] (234) at (-1, -2.25) {};
		\node [style=none] (235) at (-1, 2.25) {};
		\node [style=none] (236) at (-1, -2.25) {};
		\node [style=none] (241) at (-2, 1.5) {};
		\node [style=none] (242) at (1.5, 1.5) {};
		\node [style=none] (243) at (1.5, -1.5) {};
		\node [style=none] (244) at (-2, -1.5) {};
		\node [style=none] (246) at (1, 1) {};
		\node [style=none] (247) at (-1.5, 1) {};
		\node [style=none] (248) at (-1.5, -1) {};
		\node [style=none] (249) at (1, -1) {};
		\node [style=none] (250) at (-1.25, 0.275) {};
		\node [style=none] (251) at (-1.25, -0.275) {};
		\node [style=none] (252) at (0.75, 0.275) {};
		\node [style=none] (253) at (0.75, -0.275) {};
		\node [style=none] (254) at (-1, 1) {};
		\node [style=none] (255) at (-1, -1) {};
		\node [style=none] (256) at (-1, 0.275) {};
		\node [style=none] (257) at (-1, -0.275) {};
		\node [style=none] (258) at (-0.250001, 0) {$\phi$};
		\node [style=none] (259) at (-1, 2.25) {};
		\node [style=none] (260) at (-1, 1.5) {};
		\node [style=none] (261) at (-1, -1.5) {};
		\node [style=none] (262) at (-1, -2.25) {};
		\node [style=none] (263) at (-0.25, 0.25) {};
		\node [style=none] (264) at (-0.25, 1) {};
		\node [style=none] (265) at (-0.25, 0.275) {};
		\node [style=none] (266) at (-0.25, -0.275) {};
		\node [style=none] (267) at (-0.25, -1) {};
		\node [style=none] (268) at (-0.25, 1) {};
		\node [style=none] (269) at (-0.25, -1) {};
		\node [style=none] (270) at (-0.25, 1.5) {};
		\node [style=none] (271) at (-0.25, 2.25) {};
		\node [style=none] (272) at (-0.25, -1.5) {};
		\node [style=none] (273) at (-0.25, -2.25) {};
		\node [style=none] (274) at (0.5, 0.25) {};
		\node [style=none] (275) at (0.5, 1) {};
		\node [style=none] (276) at (0.5, 0.275) {};
		\node [style=none] (277) at (0.5, -0.275) {};
		\node [style=none] (278) at (0.5, -1) {};
		\node [style=none] (279) at (0.5, 1) {};
		\node [style=none] (280) at (0.5, -1) {};
		\node [style=none] (281) at (0.5, 1.5) {};
		\node [style=none] (282) at (0.5, 2.25) {};
		\node [style=none] (283) at (0.5, -1.5) {};
		\node [style=none] (284) at (0.5, -2.25) {};
	\end{pgfonlayer}
	\begin{pgfonlayer}{edgelayer}
		\draw [style={supermap_shade}] (241.center)
			 to (244.center)
			 to (243.center)
			 to (242.center)
			 to cycle;
		\draw [style={hole_shade}] (249.center)
			 to (248.center)
			 to (247.center)
			 to (246.center)
			 to cycle;
		\draw [style={morphism_shade}] (250.center)
			 to (252.center)
			 to (253.center)
			 to (251.center)
			 to cycle;
		\draw (256.center) to (254.center);
		\draw (257.center) to (255.center);
		\draw (260.center) to (259.center);
		\draw (262.center) to (261.center);
		\draw [in=-90, out=90] (265.center) to (264.center);
		\draw [in=-90, out=90] (267.center) to (266.center);
		\draw (273.center) to (272.center);
		\draw (271.center) to (270.center);
		\draw [in=-90, out=90] (276.center) to (275.center);
		\draw [in=-90, out=90] (278.center) to (277.center);
		\draw (284.center) to (283.center);
		\draw (282.center) to (281.center);
		\draw [style=dottededge] (281.center) to (279.center);
		\draw [style=dottededge] (280.center) to (283.center);
	\end{pgfonlayer}
\end{tikzpicture} \ = \ \begin{tikzpicture}[scale=0.6, {every node/.style}={scale=0.8}]
	\begin{pgfonlayer}{nodelayer}
		\node [style=none] (250) at (-1.25, 0.275) {};
		\node [style=none] (251) at (-1.25, -0.275) {};
		\node [style=none] (252) at (0.75, 0.275) {};
		\node [style=none] (253) at (0.75, -0.275) {};
		\node [style=none] (254) at (-1, 1) {};
		\node [style=none] (255) at (-1, -1) {};
		\node [style=none] (256) at (-1, 0.275) {};
		\node [style=none] (257) at (-1, -0.275) {};
		\node [style=none] (258) at (-0.250001, 0) {$\phi$};
		\node [style=none] (263) at (-0.25, 0.25) {};
		\node [style=none] (264) at (-0.25, 1) {};
		\node [style=none] (265) at (-0.25, 0.275) {};
		\node [style=none] (266) at (-0.25, -0.275) {};
		\node [style=none] (267) at (-0.25, -1) {};
		\node [style=none] (268) at (-0.25, 1) {};
		\node [style=none] (269) at (-0.25, -1) {};
		\node [style=none] (274) at (0.5, 0.25) {};
		\node [style=none] (275) at (0.5, 1) {};
		\node [style=none] (276) at (0.5, 0.275) {};
		\node [style=none] (277) at (0.5, -0.275) {};
		\node [style=none] (278) at (0.5, -1) {};
		\node [style=none] (279) at (0.5, 1) {};
		\node [style=none] (280) at (0.5, -1) {};
	\end{pgfonlayer}
	\begin{pgfonlayer}{edgelayer}
		\draw [style={morphism_shade}] (250.center)
			 to (252.center)
			 to (253.center)
			 to (251.center)
			 to cycle;
		\draw (256.center) to (254.center);
		\draw (257.center) to (255.center);
		\draw [in=-90, out=90] (265.center) to (264.center);
		\draw [in=-90, out=90] (267.center) to (266.center);
		\draw [in=-90, out=90] (276.center) to (275.center);
		\draw [in=-90, out=90] (278.center) to (277.center);
	\end{pgfonlayer}
\end{tikzpicture},    \] as the natural transformation in which every component is the identity function. The fact that each component is the identity function makes it immediate that the cotensor satisfies associativity, the frobenius law, and the copy law, and so the slots indeed form a higher-order circuit theory. 
\end{example}

\begin{example}
The polycategory of single-party respresentable supermaps \cite{wilson_polycategories} is the sub-polycategory of the polycategory of polyslots in which only the locally-applicable transformations which decompose locally as combs are kept. More precisely, a single-party representable supermap of type \[ S: [a_1,a_1'] \dots [a_n,a_n'] \rightarrow [b,b']  \] is a family of functions \[S_{x_1 \dots x_n,x_1 ' \dots x_n'}: \mathbf{C}(a_1 x_1,a_1 ' x_1 ') \dots \mathbf{C}(a_n x_n,a_n ' x_n ') \rightarrow \mathbf{C}(b x_1 \dots x_n, b_1 ' x_1'  \dots x_n')\] such that for every $i$ and family of morphisms $\phi_{(m)}$ with $m \in \{  1 \dots (i-1)(i+1) \dots n \}$ there exists $S(\phi_{(m)})_i^{u}$ and $S(\phi_{(m)})_i^{d}$ satisfying \[S_{x_1 \dots x_n,x_1 \dots x_n'}(\phi_1 \dots \phi_i \dots \phi_n) \quad  = \quad  \begin{tikzpicture}[scale=1, {every node/.style}={scale=1}]
	\begin{pgfonlayer}{nodelayer}
		\node [style=none] (1048) at (-0.5, 1) {};
		\node [style=none] (1059) at (0.5, 2) {};
		\node [style=none] (1061) at (-0.5, -0.25) {};
		\node [style=none] (1071) at (0.25, -0.25) {};
		\node [style=none] (1072) at (-0.5, -0.25) {};
		\node [style=none] (1074) at (-0.5, 1) {};
		\node [style=none] (1076) at (-0.5, -0.25) {};
		\node [style=none] (1077) at (-0.5, -1) {};
		\node [style=none] (1078) at (0.5, -2) {};
		\node [style=none] (1079) at (-0.5, -0.25) {};
		\node [style=none] (1081) at (-0.5, 0.25) {};
		\node [style=none] (1083) at (-0.5, -0.25) {};
		\node [style=none] (1085) at (-0.5, -0.25) {};
		\node [style=none] (1086) at (-0.5, -0.25) {};
		\node [style=none] (1089) at (-0.5, 0) {};
		\node [style=none] (1091) at (0.5, 0.25) {};
		\node [style=none] (1093) at (-0.75, 0.25) {};
		\node [style=none] (1094) at (0.75, 0.25) {};
		\node [style=none] (1095) at (0.75, -0.25) {};
		\node [style=none] (1096) at (-0.75, -0.25) {};
		\node [style=none] (1097) at (0, 0) {$\rho_k$};
		\node [style=none] (1098) at (0.5, -0.25) {};
		\node [style=none] (1099) at (0.5, 0.25) {};
		\node [style=none] (1100) at (0.5, -0.25) {};
		\node [style=none] (1101) at (0.5, 0.25) {};
		\node [style=none] (1102) at (0.5, 0.25) {};
		\node [style=none] (1104) at (-0.5, -0.25) {};
		\node [style=none] (1105) at (0.25, 1) {};
		\node [style=none] (1106) at (0.25, 1.5) {};
		\node [style=none] (1107) at (-1.75, 1.5) {};
		\node [style=none] (1108) at (-1.75, 1) {};
		\node [style=none] (1110) at (-1.25, 2) {};
		\node [style=none] (1111) at (-1.25, 1.5) {};
		\node [style=none] (1112) at (-1.25, 1) {};
		\node [style=none] (1113) at (-1.25, -1) {};
		\node [style=none] (1114) at (-0.5, -1) {};
		\node [style=none] (1115) at (-0.5, -1) {};
		\node [style=none] (1116) at (0.25, -1) {};
		\node [style=none] (1117) at (0.25, -1.5) {};
		\node [style=none] (1118) at (-1.75, -1.5) {};
		\node [style=none] (1119) at (-1.75, -1) {};
		\node [style=none] (1121) at (-1.25, -2) {};
		\node [style=none] (1122) at (-1.25, -1.5) {};
		\node [style=none] (1123) at (-1.25, -1) {};
		\node [style=none] (1124) at (1.25, 2) {};
		\node [style=none] (1125) at (-0.25, 1.5) {};
		\node [style=none] (1126) at (-0.5, 2) {};
		\node [style=none] (1127) at (-0.5, 1.5) {};
		\node [style=none] (1128) at (-0.5, -2) {};
		\node [style=none] (1129) at (1.25, -2) {};
		\node [style=none] (1130) at (-0.5, -1.5) {};
		\node [style=none] (1131) at (-0.25, -1.5) {};
		\node [style=none] (1132) at (-1.25, 2.25) {$b'$};
		\node [style=none] (1133) at (-1.25, -2.25) {$b$};
		\node [style=none] (1134) at (-0.5, 2.25) {$x_{i < k} '$};
		\node [style=none] (1135) at (0.5, 2.25) {$x_k '$};
		\node [style=none] (1136) at (1.25, 2.25) {$x_{i > k} '$};
		\node [style=none] (1137) at (-0.5, -2.25) {$x_{i < k} $};
		\node [style=none] (1138) at (0.5, -2.25) {$x_k $};
		\node [style=none] (1139) at (1.25, -2.25) {$x_{i > k} $};
		\node [style=none] (1140) at (-0.75, 1.25) {$S(\phi_{(i)})^u$};
		\node [style=none] (1141) at (-0.75, -1.25) {$S(\phi_{(i)})^d$};
	\end{pgfonlayer}
	\begin{pgfonlayer}{edgelayer}
		\draw [in=90, out=-90] (1076.center) to (1077.center);
		\draw (1093.center) to (1094.center);
		\draw (1094.center) to (1095.center);
		\draw (1095.center) to (1096.center);
		\draw (1096.center) to (1093.center);
		\draw (1081.center) to (1074.center);
		\draw [in=90, out=-90] (1059.center) to (1102.center);
		\draw [in=270, out=90] (1078.center) to (1100.center);
		\draw (1108.center) to (1105.center);
		\draw (1105.center) to (1106.center);
		\draw (1106.center) to (1107.center);
		\draw (1107.center) to (1108.center);
		\draw (1111.center) to (1110.center);
		\draw (1113.center) to (1112.center);
		\draw (1119.center) to (1116.center);
		\draw (1116.center) to (1117.center);
		\draw (1117.center) to (1118.center);
		\draw (1118.center) to (1119.center);
		\draw (1122.center) to (1121.center);
		\draw (1126.center) to (1127.center);
		\draw [in=90, out=-90] (1124.center) to (1125.center);
		\draw [in=90, out=-90] (1131.center) to (1129.center);
		\draw (1128.center) to (1130.center);
	\end{pgfonlayer}
\end{tikzpicture} . \]
It is routine to see the structural maps of enrichment are all single-party representable, and that the cotensor is single-party representable (it is a family of identity functions), since all aforementioned maps simply represent wires into-which gaps have been cut. As a result, the single-party representable supermaps form a higher-order circuit theory. 
\end{example}

\begin{example}
Given any higher-order circuit theory $(\mathcal{P}, \mathcal{C})$ we can quotient it to impose that morphisms are equal so long as they return equal results on all possible input states. More precisely, given $S , T \in \mathcal{P}(\underline{a},\underline{b})$ with $\underline{a} = \star_i a_i $, we say that $S \cong T$ is and only if for every family of polymorphisms $\phi_i : \bullet \rightarrow a_i [x_i $ then the cuts of $S,T$ along the $\phi_i$ along the $a_i$ are equal, I.E, if and only if for all $\phi_i$ then $S \circ (\phi_1 ,  \dots ,  \phi_n) = T \circ {\phi_1 ,  \dots , \phi_n}$. 
It is easy to check that this quotient is compatible with polycategory composition, that is, if $S \cong T$ and $S' \cong T'$ then $S \circ T \cong S' \cong T'$.
Consequently, one can simply construct a new polycategory $\mathcal{P}^{\#}$ with objects given by those of $\mathcal{P}$ and with polymorphisms given by equivalence classes of those in $\mathcal{P}$. It is routine to see that if the original $\mathcal{P}$ forms part of a higher-order circuit theory $(\mathcal{P}, \mathcal{C})$, then all of the relevent structural morphisms for enrichment, cotensors, and the laws between them, are inherited to $\mathcal{P}^{\#}$ to form a new higher-order circuit theory $(\mathcal{P}^{\#}, \mathcal{C})$.
\end{example}

There is an obvious omission to this list of examples, which is the polycategory of coend optics \cite{hefford_coend}. This polycategory is manifestly non-symmetric, due to the aligning of the order of holes with the order of objects in the input list of any polymorphism. Whilst one cannot in general freely add symmetry to any polycategory, it is quite clear that coend optics are an example of a polycategory which can be symmetrized. Nonetheless, this example presents possibly the most obvious problem with the definition of higher-order circuit theories, that what ought to be a canonical example is not naturally and example. This suggests that really there is a piece of mathematical technology missing here from the literature which would make a cleaner formulation of higher-order circuit theories, namely, the theory of monoidal enrichment into non-symmetric, or possibly BV \cite{blute_BV}, polycategories. 


\section{A Convenient Notation for Cotensors}
In order to prove the main result of this paper, we will first introduce an additional diagrammatic function-box style notation which will make it easier for us to work with higher-order circuit theories. The key idea is to write the inverse of the isomorphism given by representability diagrammatically as a merging map: 
  \[   \tikzfigscale{0.8}{figs/bold_notation_1} ,    \]
with the defining diagramamtic condition being that merging and then splitting does nothing as follows:
  \[   \tikzfigscale{0.8}{figs/bold_notation_2}  \quad = \quad  \tikzfigscale{0.8}{figs/bold_notation_3} .    \]
It is likely, that such diagrams are related to, or more easily represented by, proof nets for linearly distributive categories \cite{Cockett1999LinearlyFunctors}. 
\subsection{Basic Features of Function-Boxes for Higher-Order Circuit Theories}
There are a few key diagrammatic laws which can then be directly inferred, and which will be used in the main theorem of this section, we will collect them together now to streamline the presentation of this theorem. First of all, since
    \[   \tikzfigscale{0.8}{figs/bold_timesdot_4}  \quad = \quad  \tikzfigscale{0.8}{figs/bold_timesdot_3}  \quad = \quad  \tikzfigscale{0.8}{figs/bold_timesdot_2} , \] then by the Frobenius law   \[  \tikzfigscale{0.8}{figs/bold_timesdot_4} \quad = \quad  \tikzfigscale{0.8}{figs/bold_timesdot_1}  ,  \] and so   \[   \tikzfigscale{0.8}{figs/bold_timesdot_5}  \quad = \quad  \tikzfigscale{0.8}{figs/bold_timesdot_6}  .   \]
The next rule, shows that black dots can be merged in the following sense:  \[   \tikzfigscale{0.8}{figs/bold_crunch1}  \quad = \quad  \tikzfigscale{0.8}{figs/bold_crunch_2} .    \] Indeed, this property  follows immediately from the fact that \[   \tikzfigscale{0.8}{figs/bold_crunch_3}  \quad = \quad  \tikzfigscale{0.8}{figs/bold_crunch_4}   \quad = \quad  \tikzfigscale{0.8}{figs/bold_crunch_5}   .   \]
  
Using the same technique, one can derive the 
symmetry law for function-box dots:
\[   \tikzfigscale{0.8}{figs/bold_symmetry_3}  \quad = \quad  \tikzfigscale{0.8}{figs/bold_symmetry_4}   .  \] The generalisation of this law which we proved for splitting maps, is also inherited to the corresponding function-box dots to give \[  
 \tikzfigscale{0.8}{figs/bold_braid_flip_1}  \quad = \quad  \tikzfigscale{0.8}{figs/bold_braid_flip_2}  .   \] 
Using these diagrammatic notations will greatly simplify the presentation of our main result, where we will see that in any theory of supermaps, the morphisms of type $[A,A'] \rightarrow [B,B']$ define locally-applicable transformations of the same type. No theorem on the soundness for even string diagrams for symmetric polycategories, let-alone the introduction of function boxes for representability, is known to the author. The proofs and calculations of this chapter should therefore be seen as sketched instructions for constructing more elaborate algebraic proofs.

\section{Upper Bound for Higher-Order Circuits}
In this section we give a basic upper bound on the possibilities for higher-order circuits over a fixed base theory $\cat{C}$ of lower order processes. The upper bound will be given by the theory of strong profunctors \cite{PARE199833, Tambara2006, pastro2007doublesmonoidalcategories}, which is a generalisation of the story of locally-applicable transformations given in the examples section \cite{hefford_supermaps}. We give a full description (albeit assuming strictness of monoidal structure) of the definition of strong profunctors here for completeness. 
\begin{definition}
An endoprofunctor on $\mathcal{C}$ is a functor $ \mathscr{T}\colon \mathcal{C}^{op}\times \mathcal{C}\to \mathbf{Set}$.
\end{definition}
In the following we give the specialisation of strong profunctors to the case in which they act upon monoidal categories which are strict. 
\begin{definition}
A strong profunctor on $\mathcal{C}$ is a profunctor on a strict monoidal category $\mathcal{C}$ equipped with a family of functions $ \alpha_y \colon \mathscr{T}(x,x') \to \mathscr{T}(xy,xy') $ natural in $x,x'$ and dinatural in $ y$, which additionally satisfy $\alpha_i = i$ and
\[
\begin{tikzcd}[row sep=large, column sep=large]
\mathscr{T}(a,b) \arrow[r, "\alpha_y"] \arrow[dr, "\alpha_{yz}"'] &
\mathscr{T}(na,nb) \arrow[d, "\alpha_z"] \\
& \mathscr{T}(xyz,x'yz)
\end{tikzcd}
\]
\end{definition}
An important theorem on the theory of strong profunctors, is that is it in-fact a presheaf category. More precisely we have that \cite{pastro2007doublesmonoidalcategories}: \[  \mathbf{StrProf}[\mathcal{C}] \cong \mathbf{copsf}[\mathbf{Optic}[\mathcal{C}]].  \]
As a result, the category $\mathbf{StrProf}[\mathcal{C}]$ has a monoidal structure induced by the Day tensor product.
\begin{definition}
The multicategory of strong profunctors has objects given by strong profunctors and morphisms $\mathscr{P}_1 \dots \mathscr{P}_n  \rightarrow \mathscr{Q}$ given by morphisms of type $\otimes_i \mathscr{P}_i \rightarrow \mathscr{Q}$ in $\mathbf{StrProf}[\mathcal{C}]$. 
\end{definition}
One can prove a concrete form for multi-morphisms in the multi-category induced by this tensor product, as noted in \cite{hefford_supermaps}, such multi-morphisms are naturally equivalent to specifying natural transformations of the type $[(\times_i \mathbf{Optic}[\mathcal{C}]) , \mathbf{Set}](\times_i \mathscr{P}_i({-}_i) , \mathscr{Q}(\otimes_i ({-}_i)))$, that is, families of functions \[  S_{x_{(i)}x_{(i)}'}  : \times_i \mathscr{P}_i(x_i , x_i') \rightarrow \mathscr{Q}(\otimes_i x_i , \otimes_i x_i'), \]
which are natural on optics. One can reconsider such transformations inductively, stating that a familly of functions $ S_{x_{(i)} x_{n+1} x_{(i)}' x_{n+1}'}  : (\times_i \mathscr{P}_i(x_i , x_i')) \times  \mathscr{P}_{n+1}(x_{n+1} , x_{n+1}' ) \rightarrow \mathscr{Q}((\otimes_i x_i) \otimes x_{n+1} , (\otimes_i x_i') \otimes x_{n+1})$ is a multimorphism $\mathscr{P}_1 \dots \mathscr{P}_n \mathscr{P}_{n+1}  \rightarrow \mathscr{Q}$ if and only if 
\begin{itemize}
\item For every choice of $\phi_1 \dots \phi_n$ then the induced familly of functions $S^{\phi_{1} \dots \phi_n}_{x_{n+1} x_{n+1}' }$ is a multimorphism of type $ \mathscr{P}_{n+1}  \rightarrow \mathscr{Q}$,
\item For every $\phi_{n+1} \in \mathscr{P}_{n+1}(x_{n+1} , x_{n+1}' )$ then the induced familly of functions $\mathscr{Q}(\beta_{\otimes_i x_i , x_{n+1}}  , \beta_{\otimes_i x_i , x_{n+1}} ) \circ S^{\phi_{n+1}}_{x_{(i)} x_{(i)}' }$ is a multimorphism of type $\mathscr{P}_1 \dots \mathscr{P}_n  \rightarrow \mathscr{Q}$.
\end{itemize}
We now prove our main result, that any higher-order circuit theory embeds within the previously put-forward categorical construction for higher-order processes based on strong profunctors \cite{hefford_supermaps}. 
\begin{theorem}
For every higher-order circuit theory $(\mathcal{P} , \mathcal{C})$ there exists a faithul multifunctor \[\mathcal{F}: \mathcal{P}^{\#} \rightarrow \mathbf{StrProf}[\mathcal{C}]. \]
\end{theorem}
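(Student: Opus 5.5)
The multifunctor is defined by sending each object to the strong profunctor it represents on states. On objects, let $\mathcal{F}(p)(x,x') := \mathcal{P}(\emptyset, p\,[x,x'])$ for every object $p$ of $\mathcal{P}$, i.e.\ bipartite states of $p$ together with a hole of type $[x,x']$ of $\mathcal{C}$. In particular $\mathcal{F}([a,a'])(x,x') = \mathcal{P}(\emptyset,[a,a'][x,x'])$, which the cotensor identifies with $\mathcal{P}(\emptyset,[ax,a'x'])$. The cotensor isomorphism sends $\psi\in\mathcal{P}(\emptyset,[ax,a'x'])$ to $\bullet\circ\psi$; its inverse is the merging function-box. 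Composing the lowest-order states $\mathcal{P}(\emptyset,[c,c'])$ with the identity state therefore recovers the usual profunctor $\mathcal{C}(ax,a'x')$ up to this embedding.

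Profunctor functoriality in $x,x'$ comes from the enrichment. Given $f:y\to x$ and $g:x'\to y'$ in $\mathcal{C}$, regarded as states $\emptyset\to[y,x]$ and $\emptyset\to[x',y']$ (via the identity state and parallel/sequential composition), we precompose and postcompose along the $[x,x']$ leg using $\bigcircle$. Associativity and the sequential unit law give functoriality.

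Strength $\alpha_y:\mathcal{F}(p)(x,x')\to\mathcal{F}(p)(xy,x'y)$ tensors the $[x,x']$ leg with the identity state $i_y:\emptyset\to[y,y]$ via $\bigotimes$. The conditions $\alpha_I=\mathrm{id}$ and $\alpha_{yz}=\alpha_z\alpha_y$ follow from the tensor unit law, tensor associativity, and the copy law. Naturality in $x,x'$ and dinaturality in $y$ follow from interchange, which lets a morphism on $y$ slide through $i_y$ by the sequential unit law.

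On polymorphisms, take $S:p_1\dots p_n\to q_1\dots q_m$ in $\mathcal{P}$. Because the target is multicategorical, I would first treat $m=1$, and then argue that the general polycategory is carried by cotensoring outputs. The cotensor turns lists of holes into a single composite hole, and only the multicategorical part is asked to map, since $\mathbf{StrProf}[\mathcal{C}]$ is a multicategory. Given states $\phi_i\in\mathcal{P}(\emptyset,p_i[x_i,x_i'])$, define
\[
S_{\underline{x},\underline{x}'}(\phi_1,\dots,\phi_n) := \text{merge}\circ S\circ(\phi_1,\dots,\phi_n)\in\mathcal{P}(\emptyset,q\,[\otimes_i x_i,\otimes_i x_i']),
\]
where the cut produces $q\,[x_1,x_1']\dots[x_n,x_n']$ and the merging function-box collapses the $n$ holes into one. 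This is well defined on $\mathcal{P}^{\#}$ by the very definition of the quotient, because the map only depends on $S$ through its cuts with states. The same fact gives faithfulness immediately: if $\mathcal{F}(S)=\mathcal{F}(T)$ then all cuts agree after merging, hence before merging, since merging is invertible. Some care is needed to align the quotient's notion of state $\phi_i:\bullet\to a_i[x_i$ with these bipartite states. Preservation of composition and identities then reduces to associativity of polycategorical cuts together with the crunch law (merging black dots) and the braid-flip law for reordering the $x_i$.

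The main obstacle is showing that $\mathcal{F}(S)$ is natural on optics, i.e.\ a genuine Day-multimorphism. I would use the inductive characterisation stated just before the theorem, which reduces the problem to single-input naturality and strength-commutation in each variable after fixing the others and braiding. For a single input, commuting with pre/post-composition by $\mathcal{C}$-morphisms on the ancilla leg holds because those act by a cut on a different leg than $S$, and polycategorical cuts on disjoint legs commute. Commuting with $\alpha_y$ needs the Frobenius laws: tensoring $i_y$ onto a merged hole must equal merging after tensoring onto the last component. I expect this step, especially the Frobenius and braid bookkeeping with function-boxes when the other inputs are fixed, to be the delicate part.
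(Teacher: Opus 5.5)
This is essentially the paper's proof. You use the same object assignment $\mathcal{F}(a)(x,x')=\mathcal{P}(\bullet,a[x,x'])$, the same cut-then-merge action on polymorphisms, and the same profunctor action and strength built from $\bigcircle$, $\bigotimes$ and $i_y$. Naturality on optics also goes through the same inductive characterisation: the single-input case comes from associativity of cuts on disjoint legs, and the case with the other inputs fixed is handled by merging, Frobenius and interchange.

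Your explicit argument that $\mathcal{F}$ is well defined on $\mathcal{P}^{\#}$ and faithful, because merging is a bijection, fills in something the paper leaves implicit. One small correction: for a genuinely single input, strength commutation also follows from disjoint-leg associativity alone. It is only once the other inputs are fixed that Frobenius is needed, and then for the $\mathcal{C}$-action as well as the strength, since both must be pushed through the merge.
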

\begin{proof}
On objects we take $\mathcal{F}(a)(x,x') = \mathcal{P}(\bullet , a [x,x'])$ and on morphisms we take \[ \mathcal{F}(S: a_1 \dots a_n  \rightarrow b)_{e_1 \dots e_n ,e_1 ' \dots e_n '}(\phi_1 ,  \dots , \phi_n)  \] to be given by: \[  \tikzfigscale{0.8}{figs/multifunctor_def_1arx}.  \] 

First, we confirm that the assignment $\mathcal{F}(a)(x,x') = \mathcal{P}(\bullet , a [x,x'])$ indeed returns a strong profunctor. On morphisms we define \[  \mathcal{P}(\bullet , a [f,g]) \left(  \tikzfigscale{0.8}{figs/prof_0a}  \right)   \quad := \quad \tikzfigscale{0.8}{figs/prof_0b}, \]
with functoriality given by
\[  \tikzfigscale{0.8}{figs/prof_1} \quad = \quad    \tikzfigscale{0.8}{figs/prof_2}. \] The strength is given by the function  \[  \alpha_{y} \left(  \tikzfigscale{0.8}{figs/sprof_0a}  \right)   \quad := \quad \tikzfigscale{0.8}{figs/sprof_0b}, \] where naturality and dinaturality of the strength are given by \[  \tikzfigscale{0.8}{figs/sprof_nat_1} \quad = \quad    \tikzfigscale{0.8}{figs/sprof_nat_2},  \]  and 
\[  \tikzfigscale{0.8}{figs/sprof_dinat_1} \quad = \quad    \tikzfigscale{0.8}{figs/sprof_dinat_2},  \] 
respectively. 
The associativity and unit laws for strong profunctors are then given by \[  \tikzfigscale{0.8}{figs/assoc_1} \quad = \quad    \tikzfigscale{0.8}{figs/assoc_2} \quad = \quad    \tikzfigscale{0.8}{figs/assoc_3} , \]
and \[  \tikzfigscale{0.8}{figs/sprof_unit_1} \quad = \quad    \tikzfigscale{0.8}{figs/sprof_0a} ,  \] respectively.  

We now check the result is indeed a locally-applicable transformation of the same type, to do so we check commutation with all combs by separately checking commutation with sequential and parallel compositions\footnote{See the proof of the theorem on slide drag for a fuller discussion of sliding and dragging rules.}. We give the proof for one-input supermaps for simplicity, the multi-input case can then be proven by induction. In the single-input case, first note that the functor simplifies to  \[  \tikzfigscale{0.8}{figs/simple_1} \quad = \quad    \tikzfigscale{0.8}{figs/simple_2}, \]
and so we have that the dragging law and sliding law are inherited by the associativity of polycategory composition \[  \left( \tikzfigscale{0.8}{figs/boldsingle_2arx} \quad = \quad    \tikzfigscale{0.8}{figs/boldsingle_4arx} \right)  \quad \bigwedge  \quad  \left(   \tikzfigscale{0.8}{figs/slide_1_arx} \quad = \quad    \tikzfigscale{0.8}{figs/slide_2_arx} \right) .  \] Let us now verify the multi-input case by induction, assume that the hypothesis is true for the $N$-input case, now consider the $N+1$ input case. A family of functions is an $N+1$-input locally-applicable transformation if filling in the first $N$ entries gives an $N=1$ input locally-applicable transformation and filling in the last entry gives an $N$-input locally applicable transformation (up to applying swaps). Indeed, for the former case see that since  \[   \tikzfigscale{0.8}{figs/boldinduction1arx} \quad = \quad   \tikzfigscale{0.8}{figs/boldinduction2arx},   \]
we can use representability to recover the required form
 \[   \tikzfigscale{0.8}{figs/bold_collapse_1arx} \quad = \quad   \tikzfigscale{0.8}{figs/bold_collapse_2arx} ,  \]
 and so filling in the first $N$-inputs gives a $1$-input locally applicable transformation. For the latter case  \[   \tikzfigscale{0.8}{figs/newinduction} \quad = \quad   \tikzfigscale{0.8}{figs/boldinduction3arx} , \] 
 
 \[   = \quad   \tikzfigscale{0.8}{figs/boldinduction4arx}, \] which by the interchange law can be rewritten as    \[   \tikzfigscale{0.8}{figs/boldinduction5arx}  \quad = \quad  \tikzfigscale{0.8}{figs/boldinduction6arx}. \] After using the merging rule this gives \[  \tikzfigscale{0.8}{figs/boldinduction7arx}  ,   \] and so filling input $N+1$ gives up to swaps an $N$-input locally applicable transformation by assumption of the induction hypothesis, completing the proof. Functoriality in the single input case is clear.
Multi-functoriality, is verified as follows \[    \tikzfigscale{0.8}{figs/multi_func_law_1arx}  \]
 \[ = \quad  \tikzfigscale{0.8}{figs/multi_func_law_2arx}    \] 
  \[    = \quad  \tikzfigscale{0.8}{figs/multi_func_law_3arx}.     \]
\end{proof}

\section{Summary}
Higher-order circuits, albeit in the case in which all possible compositional structures are treated as strict, can be given a relativity simple categorical semantics in terms of polycategories, enrichment, and cotensors. Each law for higher-order circuits has an interpretation as a graphically tautology, and with only those laws in place it is possible to make a direct connection between higher-order circuits and previous categorical approaches to the study of holes based on profunctor optics. As a result, we appear to have distilled an intuitive concept into categorical algebra. Furthermore, we have provided a formal setting on top of which resource theories of higher-order processes might be studied as sub higher-order circuit theories.

Natural questions on the theoretical side are:
\begin{itemize}
\item How can the definition of higher-order circuits be adapted to the more complicated case in which $\cat{C}$ is no-longer forced to be strict. In particular, could there be strictification and coherence theorems \cite{Lane1971CategoriesMathematician} for higher-order circuits suitably categorified? 
\item Can the planar diagrams with holes be used to build a sound and complete graphical language for higher-order circuit theories, in the same way that string diagrams are sound and complete for monoidal categories \cite{Joyal1991TheI}?
\item Do there remain other reasonable laws for planar holes (ones which can be interpreted as graphical tautologies) which are not generated by the laws put forward here?
\item Higher-order quantum operations \cite{Bisio_2019, kissinger_caus} can be equipped with rather complex composition operations \cite{apadula2022nosignalling,simmons_completelogic,  SimmonsKissinger2022,hefford2025bvcategoryspacetimeinterventions}. Most notably in the introduction of a sequencing tensor product which gives higher-order operations the structure of BV-categories in the representable case \cite{blute_BV}. It is unclear at this stage whether information about the sequencing tensor product has been completely lost, or whether its non-representable structure might be encoded into the structure of enrichment operations (in particular the sequential composition morphisms).
\item In this article we focussed on upper bounds, however, it is quite apparent that their ought to exist lower bounds for higher-order circuits too. Most notably, the existence of enrichment morphisms allows for the construction of all combs \cite{wilson2023mathematical}. Naturally then, one can naturally expect that combs or more generally coend-optics over a category might embed canonically into all other higher-order circuit theories over the same category. The existence of such embeddings and their potential uniqueness as structure preserving maps between higher-order circuit theories is left as a future topic of investigation.
\item From the perspective of the foundations of physics, the beyond-monoidal structure of higher-order processes motivates the generalisation of process theories \cite{selby2025generalisedprocesstheories}, in terms of operads over wiring diagrams \cite{yau2018operads, spivak_normal_form}. It is natural then to wonder, whether there might exist an operad over which the algebras are higher-order circuit theories.  
\end{itemize}
The answers to these questions will determine whether there might exist in the most formal possible sense a stable categorical framework for holes and higher-order processes.

\section{Acknowledgements}
The majority of this work was completed while the author was a PhD student at the Quantum Group of the Computer Science Department at the University of Oxford, and can be found in the associated thesis  \cite{wilson2023compositional}. Refinements have been made since, both while affiliated to University College London and now CentraleSupelec, University of Paris-Saclay.
The author (MW) is particularly grateful to Giulio Chiribella (GC) and James Hefford (JH), this work puts together many ideas produced through ongoing conversations and projects in collaboration with both GC and JH. From GC in particular, comes the idea that multi-input higher-order processes could be modelled using theories with preferred sequential and parallel composition \textit{morphisms} \cite{wilson2023mathematical}; and from JH comes the idea that strong profunctors could provide the domain for locally-applicable transformations and so generalise the spaces on which they act \cite{hefford_supermaps}. 
MW is also grateful to Peter Selinger, Gaurang Agrawal, and Benoit Valiron, for useful conversations during the development of this manuscript. 
MW was funded by the Engineering and Physical Sciences Research Council grant number EP/L015242/1 while at the University of Oxford and grant number EP/W524335/1 while at University College London. This project/publication was made possible through the support of the ID$\#$62312 grant from the John Templeton Foundation, as part of “The Quantum Information Structure of Spacetime” Project (QISS). The opinions expressed in this project/publication are those of the author(s) and do not necessarily reflect the views of the John Templeton Foundation. This work has also been funded by the French National Research Agency (ANR), (i) by the project TaQC ANR-22-CE47-0012, and (ii) within the framework of “Plan France 2030”, under the research projects EPIQ ANR-22-PETQ-0007, OQULUS ANR-23-PETQ-0013, HQI-Acquisition ANR-22-PNCQ-0001 and HQI-R$\&$D ANR-22-PNCQ-0002.



\appendix

\section{The Multi-Party Braid Law}

\begin{lemma}
In any symmetric higher-order circuit theory, the following law holds \[   \tikzfigscale{0.8}{figs/bold_braid_1}  \quad = \quad  \tikzfigscale{0.8}{figs/bold_braid_2}.   \]
\end{lemma}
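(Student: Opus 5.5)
We prove the multi-party braid law by induction on the number of parallel factors, starting from the two-party braid law in the definition of a braided higher-order circuit theory. The statement involves an $n$-ary cotensor $\bullet : [a_1 \dots a_n, b_1 \dots b_n] \rightarrow [a_1,b_1] \dots [a_n,b_n]$ pre-composed with a braid (a permutation built from the $\beta$'s of $\cat{C}$, acting through sequential composition on both sides). It asserts that this equals the cotensor followed by the corresponding permutation of the output wires in the symmetric polycategory $\mathcal{P}$.

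\textbf{Base cases.} For $n=1$ there is nothing to prove, and for $n=2$ the claim is exactly the braid law. For the inductive step we need two ingredients:
\begin{itemize}
\item The permutation on $n+1$ factors decomposes, using the strictness of $\cat{C}$, as a composite of a permutation on the first $n$ factors tensored with an identity, and a single block braid $\beta_{(a_1 \dots a_n), a_{n+1}}$. Any permutation can be written this way, up to a choice of where the last factor goes, which is itself a block braid of a prefix past one object.
\item The associativity laws for the cotensor let us factor the $(n+1)$-ary cotensor as a binary cotensor $[a_1 \dots a_n a_{n+1}, \dots] \rightarrow [a_1\dots a_n, \dots][a_{n+1},\dots]$, followed by the $n$-ary cotensor on the first output.
\end{itemize}

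\textbf{Inductive step.} First push the block braid through the binary cotensor. This is an instance of the two-party braid law with $a = a_1\dots a_n$ and $b = a_{n+1}$, which is legitimate because the braid law is stated for arbitrary objects of $\cat{C}$ and products of objects are objects. The block braid then becomes a polycategorical swap of the two outputs. Next, the remaining braid on the first $n$ factors has the form $\sigma \otimes \mathrm{id}$. Using the interchange law together with the Frobenius laws, we move it onto the first output of the binary cotensor as a braid acting on the $[a_1\dots a_n,\dots]$ component. The copy law supplies the identity on the $(n+1)$th component: it lets the sequential unit pass through the cotensor, so $\mathrm{id}$ on the last wire disappears. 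The induction hypothesis then converts this into the $n$-ary cotensor followed by the permutation $\sigma$ of outputs. Finally, associativity of the cotensor reassembles everything into an $(n+1)$-ary cotensor followed by the composite permutation. By the symmetric polycategory axioms, this composite is the required permutation.

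\textbf{Main obstacle.} The delicate step is moving $\sigma\otimes \mathrm{id}$ past the binary cotensor. It requires showing that a parallel composite $\bigotimes$ of a braid with an identity, sequentially composed into a hole, is split by the cotensor into separate actions on each factor. This is precisely the content of the Frobenius laws combined with the copy law. I would first verify it in the function-box notation of the previous section, using the merging rule for black dots, and then translate back. Bookkeeping of the block-braid decomposition is routine by comparison.
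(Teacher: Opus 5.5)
Your argument holds at the paper's level of rigour, but you decompose the braid differently from the paper. The paper also inducts on the number of strands. It splits the braid on $n+1$ strands using the composition rule for swaps in $\cat{C}$, $\beta_{a,bc}=(1_b\otimes\beta_{a,c})\circ(\beta_{a,b}\otimes 1_c)$. It then moves each factor onto a single output using cotensor associativity and naturality, applies the induction hypothesis to the $n$-strand piece and the $n=1$ law to the lone crossing, and reassembles by associativity.

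You instead apply the binary braid law once, to the composite object $a_1\cdots a_n$, and spend the induction hypothesis on the residual permutation $\sigma$. This buys something. For a block braid of one wire past a bundle of wires, your observation alone already gives the law, with no induction and no use of the Frobenius or copy laws: apply the binary law at a composite object, split the composite hole by cotensor associativity, and use equivariance of composition under the symmetric action in $\mathcal{P}$. Your ``main obstacle'' lemma, which plays the role of the paper's naturality step, is only needed for the arbitrary-permutation statement you formulate. The paper's route, in exchange, keeps every use of the braid axiom at a single crossing of displayed wires, matching the diagrammatic reading, but it needs naturality even for block braids.

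One piece of bookkeeping needs repair. As you set it up (first $\sigma\otimes\mathrm{id}$, then the block braid pushed through the binary cotensor), $a_{n+1}$ always ends up in first position, so not every permutation arises. A general permutation needs a braid $\mathrm{id}\otimes\beta_{w,a_{n+1}}$, with $w$ a final segment of the first $n$ wires; this is a suffix, not a prefix, moved past one object. That forces your naturality lemma to be applied with the identity on the other side. It is simpler to note that the law is closed under composition of permutations, by sequential associativity on one side and functoriality of the symmetric action on the other. It then suffices to check the law on generators such as $\sigma\otimes\mathrm{id}$ and the full block braid.
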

\begin{proof}
Imagine that the property were true for some $n$, then when checking the property for $n+1$ we find that \[   \tikzfigscale{0.8}{figs/bold_braid_proof_1}  \quad = \quad  \tikzfigscale{0.8}{figs/bold_braid_proof_2} ,    \] which after using naturality and associativity gives

\[  = \quad  \tikzfigscale{0.8}{figs/bold_braid_proof_3}  \quad = \quad  \tikzfigscale{0.8}{figs/bold_braid_proof_4} .     \] 
Next, using the property for $n=1$ along with naturality and rules for the composition of swap morphisms in $\mathbf{C}$ gives
\[  = \quad  \tikzfigscale{0.8}{figs/bold_braid_proof_5}  \quad = \quad  \tikzfigscale{0.8}{figs/bold_braid_proof_7}      \] 
which by associativity gives the required form
\[   = \quad  \tikzfigscale{0.8}{figs/bold_braid_proof_8} .    \] 
Note that as a consequence of these property we can also see that \[  
 \tikzfigscale{0.8}{figs/bold_braid_extra_1}  \quad = \quad  \tikzfigscale{0.8}{figs/bold_braid_extra_2}      \] 

\end{proof}
\end{document}